\documentclass[12pt,a4paper,headings=small,oneside]{scrartcl}

\usepackage{amsmath, amssymb, amsfonts}
\usepackage{amsthm}
\usepackage{mathtools}
\usepackage{graphics}
\usepackage{epstopdf}
\usepackage{natbib}
\usepackage{bbm}
\usepackage{bm}
\usepackage{setspace}
\usepackage{color}
\usepackage[hyphens]{url}
\usepackage{hyperref}
\usepackage{enumitem}
\usepackage{comment}

\newtheorem{theorem}{Theorem}

\newtheorem{proposition}[theorem]{Proposition}

\theoremstyle{definition}

\newtheorem{example}[theorem]{Example}
\newtheorem{definition}[theorem]{Definition}
\newtheorem{remark}[theorem]{Remark}

\newcommand{\I}{{\textbf 1}}

\newcommand{\C}{{\mathbb C}}

\DeclareMathOperator*{\argmin}{arg\,min}
\DeclareMathOperator{\crd}{crd}

\newcommand{\Ex}{\mathbb{E}} 

\usepackage[colorinlistoftodos,textsize=tiny]{todonotes}
\newcommand{\Comments}{1}
\newcommand{\mynote}[2]{\ifnum\Comments=1\textcolor{#1}{#2}\fi}
\newcommand{\mytodo}[2]{\ifnum\Comments=1%
	\todo[linecolor=#1!80!black,backgroundcolor=#1,bordercolor=#1!80!black]{#2}\fi}

\ifnum\Comments=1               
\fi

\begin{document}

\title{Circular Expectiles}
\author{Bernhard Klar\footnote{ bernhard.klar@kit.edu} \\
\small{Institute of Stochastics,} \\
\small{Karlsruhe Institute of Technology (KIT), Germany.}
}

\date{\today }
\maketitle

\begin{abstract}
In this work, we introduce circular expectiles as minimizers of an asymmetric circular loss function based on chord distance. In contrast to the linear expectile criterion, the resulting circular optimization problem is non-convex, so existence and uniqueness require a separate analysis. The construction extends linear expectiles to directional data while preserving the circular mean as the symmetric case corresponding to $\alpha=1/2$. We derive basic representations of the objective function and the associated identification function, and give a geometric interpretation that generalizes the corresponding representation for the circular mean. Furthermore, we prove the existence and uniqueness of the minimizers for distributions with positive density on the circle.
The empirical circular expectile is defined by using the sample circular mean as reference direction for the induced linear order on the circle. We prove the uniqueness of the empirical expectile, as well as its consistency and finite-dimensional asymptotic normality. Finally, we indicate possible applications to circular measures of dispersion, skewness, and symmetry diagnostics.
\end{abstract}

\noindent\textbf{Keywords:} 
Circular statistics; directional data; circular mean; expectiles; asymmetric loss;
identification functions; asymptotic normality.

\section{Introduction} \label{sec1}

Directional and circular data analysis has a long history, and comprehensive background on measures of location, spread and shape on the circle can be found in the monographs by \cite{Fi:1993}, \cite{JS:2001}, and \cite{MJ:2000}.
These references cover the classical circular mean, median-type notions, measures
of concentration and dispersion, as well as descriptive quantities such as
circular skewness and kurtosis. 

Among measures of circular location, the most prominent notion is the
\emph{circular mean}, also called the extrinsic mean, which is characterized by
the direction of the mean resultant vector. Its large-sample theory is standard;
see, for example, \citet[p.76]{JS:2001} and \citet[p.76]{MJ:2000}, as well as \cite{Pe:2004}
for joint asymptotics of several key circular statistics, including skewness and kurtosis. Beyond the extrinsic mean, intrinsic or Fr\'echet-type means on the circle have been studied in detail by \cite{BP:2003, BP:2005}, by \cite{MQC:2012}, by \cite{Ho:2013}, and by \cite{HH:2015}, with particular emphasis on existence, uniqueness and asymptotic
properties. 

A second line of work concerns median-type location functionals based
on arc distance. On the circle, this includes the Mardia-Fisher median and
related spherical medians; relevant references are \cite{Fi:1985}, \citet[p.35]{Fi:1993}, \cite{LS:1992}, and \cite{Pu:1995}. These papers also connect circular location with notions of depth and ordering for directional data. In particular, Liu and Singh introduced
circular and spherical depth concepts, thereby providing a framework in which location, centrality and ordering can be treated simultaneously. 

More recently, quantile-type methodology for directional data has been developed in several directions. \cite{LSV:2014} proposed a notion of quantiles for directional data together with an angular Mahalanobis depth, thereby extending spatial-quantile ideas to the circular setting. On the regression side, \cite{DPT:2016} developed nonparametric circular quantile regression. Thus, in the literature on circular statistics there is by now a substantial body of work on location functionals based on means, medians, depth and quantiles.

In contrast, \emph{expectiles} have so far been studied almost exclusively in the linear setting. Introduced by \cite{NP:1987} as minimizers of an asymmetric quadratic loss function, they form a one-parameter family of non-central location functionals that interpolates through the mean and has attractive analytical and statistical properties. In particular, \cite{BKMG:2014} studied expectiles as generalized quantiles and coherent risk measures, \cite{HK:2016} developed their asymptotic theory in detail, and \cite{BKM:2018} related expectiles to omega ratios and stochastic ordering. 

This paper extends the idea to the circular setting. Starting from the circular mean as the symmetric benchmark, we introduce circular expectiles as minimizers of an asymmetric circular loss function based on chord distance. As in the linear case, expectiles are therefore measures of location related to  circular means and medians, as well as to circular quantiles and depths. However, unlike linear expectiles, the circular criterion is not convex in the location parameter. Even the symmetric loss function $1-\cos(x-\nu)$ has second derivative $\cos(x-\nu)$, which changes sign on the circle. Consequently, the standard arguments in convex M-estimation cannot be used to derive existence, uniqueness and asymptotic properties directly, and a direct analysis of the circular identification function is required instead.

The paper is organized as follows. The remainder of the introduction recalls the circular mean and the basic definition of linear expectiles. Section~2 introduces circular expectiles, derives basic representations and identification functions, and establishes existence and uniqueness results under suitable regularity assumptions. Section~3 studies the empirical version, including existence and uniqueness. Section~4 derives consistency and finite-dimensional asymptotic normality of empirical circular expectiles and concludes with a discussion of possible applications to circular measures of dispersion, skewness, and symmetry diagnostics.

\subsection{Circular mean}

Throughout the paper, the circular mean plays a fundamental role. We therefore recall its definition and basic properties in this subsection. 

Consider a circular random variable $X$, i.e. a random angle with values in $[-\pi,\pi)$. 
We identify every angle $\phi\in [-\pi,\pi)$ with the point $z=\zeta(\phi)=(\cos\phi,\sin\phi)$ on the unit circle $S^1=\{z\in \C:|z|=1\}$.
Let $\crd(\phi)$ denote the chord of $\phi$. The population circular mean set of $X$ or $Z=e^{iX}$ is defined as
\begin{align} \label{circ-mean}
	\mathcal M &= \argmin_{\nu \in [-\pi,\pi)} \Ex \left( \frac{\crd^2(X-\nu)}{2} \right) 
	= \argmin_{\nu \in [-\pi,\pi)} \Ex \left[1-\cos\left(X-\nu\right)\right].
\end{align}%
It is well known that $\mathcal M$ is a singleton if and only if $\Ex e^{iX}\neq 0$. In this case we denote its unique element by $\mu=\operatorname{Arg}_{[-\pi,\pi)}(\Ex Z)$. The direction $\mu$ satisfies $E\sin(X-\mu)=0$; the antipodal direction also satisfies this first-order equation, but corresponds to the maximum of the criterion.

Consider independent random variables $\theta_1,\ldots,\theta_n\sim X$, and write $Z_i=e^{i\theta_i}, i=1,\ldots,n$. The circular sample mean set is 
\begin{align} \label{emp-circ-mean}
  \widehat{\mathcal M}_n 
  &= \argmin_{\nu \in [-\pi,\pi)} \sum_{i=1}^n \left( 1-\cos\left(\theta_i-\nu\right)\right).
\end{align}
If $\bar{Z}_n=\frac1n\sum_{i=1}^n Z_i \neq 0$, then this set is a singleton and its element $\hat\mu_n$ is determined by
\[
 e^{i\hat\mu_n}=\frac{\bar Z_n}{|\bar Z_n|}, \quad
\hat\mu_n=\operatorname{Arg}_{[-\pi,\pi)}(\bar Z_n).
\]
For completeness, and since we will use the resulting representation below, we briefly recall a standard argument based on (\ref{circ-mean}). Define
\begin{align*} 
g(\nu) = 1 - \Ex[\cos(X-\nu)], \quad C=\Ex[\cos(X)], \quad S=\Ex[\sin(X)]
\end{align*}
Assuming $R=|\Ex Z|=\sqrt{C^2+S^2}>0$, and using $\cos(x-y)=\cos x\cos y+\sin x\sin y$ twice, we obtain
\begin{align} 
g(\nu) &= 1-C\cos(\nu)-S\sin(\nu)   \label{eq:g1} \\ 
&= 1 - R\cos(\nu-\phi), \nonumber
\end{align}
where $\phi=\operatorname{Arg}_{[-\pi,\pi)}(C+iS)$ . Thus the unique minimum is attained at $\nu=\phi$, whereas the unique maximum is attained at the antipodal direction.

A geometrical argument directly considers (\ref{eq:g1}). Minimizing $g$ is equivalent to maximizing the dot product $C\cos(\nu)+S\sin(\nu)$ between $(C,S)$ and $(\cos(\nu),\sin(\nu))$.
The maximum is attained if both vectors point in the same direction, which leads to the solution
\begin{align} \label{mu:geo}
    \left( \cos\phi, \sin\phi \right) &= \frac{(C,S)}{R}, 
\end{align}
the projection of $(C,S)$ onto the unit circle.

\subsection{Linear expectiles}

In this subsection, let $X$ be a real-valued random variable having a finite mean (denoted as $X \in L^1$). 
For $X\in L^2$, expectiles were introduced by \citet{NP:1987} as minimizers of an asymmetric quadratic loss:
\begin{equation}
e_X(\alpha)=\arg \min_{t\in \mathbb{R}}\left\{ E \ell_\alpha(X-t) \right\},  \label{exp_def_1}
\end{equation}%
where
\begin{equation*}
\ell_\alpha(x) = 
\begin{cases}
\alpha x^2, & \mbox{ if } x \ge 0, \\
(1-\alpha) x^2, & \mbox{ if } x < 0,%
\end{cases}%
\end{equation*}
and $\alpha \in (0,1)$. 
For $X\in L^{1}$, the objective in \eqref{exp_def_1} is replaced by the centered version
\begin{equation}
e_X(\alpha) = \arg \min_{t\in \mathbb{R}} \left\{ E\left[ \ell_\alpha(X-t) -
\ell_\alpha(X) \right]\right\}.  \label{exp_def_3}
\end{equation}
The minimizer in \eqref{exp_def_1} or \eqref{exp_def_3} is always unique and is identified by the first order condition%
\begin{equation*}
\alpha E\left( X-e_{X}(\alpha )\right) _{+}=(1-\alpha )E\left(
X-e_{X}(\alpha )\right) _{-},  
\end{equation*}
where $x_{+}=\max \{x,0\}$, $x_{-}=\max \{-x,0\}$.
This is equivalent to characterizing expectiles via an identification function, which, for any $\alpha \in (0, 1)$ is defined by
\begin{equation*}
	I_\alpha(x, y) = \alpha (y - x) \mathbbm{1}_{\{y \geq x\}} - (1 - \alpha) (x - y) \mathbbm{1}_{\{y < x\}}
\end{equation*}
for $x, y \in \mathbb{R}$. The $\alpha$-expectile of a random variable $X \in L^1$ is then the unique solution of
\begin{equation*}
	E I_\alpha(t, X) = 0, \quad t \in \mathbb{R}.
\end{equation*}
For $\alpha=1/2$, this gives $e_X(1/2)=\Ex X$. 
Further properties of expectiles, in particular their monotonicity, continuity and interpretation as generalized quantiles and risk measures, can be found in \citet{BKMG:2014}, \citet{BKM:2018}, and \citet{Ziegel:2016}. For asymptotic results on empirical expectiles, we refer to \citet{NP:1987}, \citet{HK:2016}, and \citet{KZ:2017}; see also \citet{SK:2012} for related work on expectile regression.

\section{Circular expectiles}

In this section, we introduce circular expectiles as minimizers of an asymmetric circular loss, and develop their basic representations and structural properties.

\begin{definition}
Let $X$ be a circular random variable with values in $[-\pi,\pi)$. Assume $R=|\Ex e^{iX}|>0$ and let $\mu=\arg(Ee^{iX})$ denote the (unique)  circular mean direction. Assume further that
\begin{align} \label{antipode}
P(e^{iX}=-e^{i\mu})=0.
\end{align}
Let $X_{\mu}$ denote the unique representative of $X$ modulo $2\pi$ belonging to $(\mu-\pi,\mu+\pi)$.
For $\alpha\in(0,1)$, the circular $\alpha$-expectile of $X$ is defined as
\begin{align*} 
\mathcal{E}_{\alpha} = \argmin_{\nu \in (\mu-\pi,\mu+\pi)} \, g_{\alpha}(\nu),
\end{align*}
where
\begin{align*}
g_{\alpha}(\nu) &= \Ex \big[ \alpha (1-\cos(X_{\mu}-\nu)) \, \I_{\{ X_{\mu} \in (\nu,\mu+\pi) \}} 
+ (1-\alpha) (1-\cos(X_{\mu}-\nu)) \, \I_{\{ X_{\mu} \in (\mu-\pi,\nu] \}} \big].
\end{align*}
If $\mathcal{E}_{\alpha}$ is a singleton, we denote its element  by $\mu_{\alpha}$.
\end{definition}

\begin{remark}
\begin{enumerate}[label=(\roman*)]
\item 
The assumption in \eqref{antipode} avoids ambiguity in the definition at the antipodal point. It is automatically satisfied for absolutely continuous distributions, which are the main focus of this paper.
\item 
We have $\mu_{1/2} = \mu$ as in the linear case. The motivation behind this definition is also analogous: since $1-\cos(\theta-\nu)$ measures the distance between angles $\theta$ and $\nu$, the quantity $1-\Ex[\cos(X-\nu)]$ is a measure of the variability of $X$ about the direction $\nu$. In contrast to the (circular) variance, these deviations are priced asymmetrically, depending on the sign of the deviation.
\end{enumerate}
\end{remark}

\subsection{Definition and basic representations}

In the remainder of this section, all random angles are interpreted via their unique representatives in the interval $(\mu-\pi,\mu+\pi)$. For simplicity, we write $X$ instead of $X_\mu$.
Accordingly, all inequalities are understood with respect to the linear order on $(\mu-\pi,\mu+\pi)$.
Define
\[
F(\nu)=\mathbb{E}[\mathbf{1}_{\{X\le \nu\}}], \qquad
C=\mathbb{E}[\cos X], \qquad
S=\mathbb{E}[\sin X],
\]
and
\[
C_\nu^- = \mathbb{E}[\cos(X)\mathbf{1}_{\{X\le \nu\}}], \qquad
S_\nu^- = \mathbb{E}[\sin(X)\mathbf{1}_{\{X\le \nu\}}],
\]
together with
\[
C_\nu^+ = C - C_\nu^-, \qquad  S_\nu^+ = S - S_\nu^-.
\]
Then the objective function can be written as
\begin{align*} 
g_\alpha(\nu) &= \alpha + (1-2\alpha)F(\nu) - C_\alpha(\nu)\cos\nu - S_\alpha(\nu)\sin\nu,
\end{align*}
where
\[
C_\alpha(\nu)=\alpha C + (1-2\alpha)C_\nu^-,
\quad
S_\alpha(\nu)=\alpha S + (1-2\alpha)S_\nu^-.
\]
The boundary values are
\[
\lim_{\nu\downarrow \mu-\pi} g_\alpha(\nu)=\alpha(1+R),
\quad
\lim_{\nu\uparrow \mu+\pi} g_\alpha(\nu)=(1-\alpha)(1+R),
\]
where $R=\Ex[\cos(X-\mu)]=C\cos\mu+S\sin\mu=\sqrt{C^2+S^2}$ denotes the mean resultant length.

\subsection{Identification function}

Assume that $X$ has a density $f$. Define the function
\[
I_\alpha(\nu,x) = \alpha \sin(\nu-x)\mathbf{1}_{\{x>\nu\}}
+ (1-\alpha)\sin(\nu-x)\mathbf{1}_{\{x\le \nu\}},
\]
and
\[
I_\alpha(\nu,F)=\mathbb{E}[I_\alpha(\nu,X)].
\]
We now show that $I_\alpha(\nu,F)$ is the derivative of $g_\alpha(\nu)$.
Differentiating under the integral sign yields
\[
g_\alpha'(\nu)
= \alpha \mathbb{E}\big[\sin(\nu-X)\mathbf{1}_{\{X>\nu\}}\big]
+ (1-\alpha)\mathbb{E}\big[\sin(\nu-X)\mathbf{1}_{\{X\le\nu\}}\big],
\]
since $\frac{d}{d\nu}(1-\cos(x-\nu))=\sin(\nu-x)$ and the contribution from the
moving boundary $x=\nu$ vanishes because $1-\cos(0)=0$. Hence,
\[
g_\alpha'(\nu)=I_\alpha(\nu,F).
\]
Thus, $I_\alpha(\nu,F)$ serves as an identification function for the circular $\alpha$-expectile. In particular, any minimizer satisfies the first-order condition
\[
I_\alpha(\nu,F)=0.
\]
We next derive alternative representations of $I_\alpha(\nu,F)$. Using integration by parts, we obtain
\begin{align*}
I_{\alpha}(\nu,F) = -\alpha \int_{\nu}^{\mu+\pi} (1-F(x)) \cos(x-\nu)\, dx 
+ (1-\alpha) \int_{\mu-\pi}^{\nu} F(x) \cos(x-\nu)\, dx.
\end{align*}
A further representation is obtained by expressing the integrals in terms of the
trigonometric partial moments:
\begin{align}
I_{\alpha}(\nu,F) &= \alpha \bigl( C_{\nu}^+\sin\nu - S_{\nu}^+\cos\nu \bigr)
+ (1-\alpha)\bigl( C_{\nu}^-\sin\nu - S_{\nu}^-\cos\nu \bigr) \nonumber \\
&= C_{\alpha}(\nu)\sin\nu - S_{\alpha}(\nu)\cos\nu. \label{ident-fct}
\end{align}
The boundary values are
\[
\lim_{\nu\downarrow \mu-\pi} I_{\alpha}(\nu,F)=0, \qquad
\lim_{\nu\uparrow \mu+\pi} I_{\alpha}(\nu,F)=0.
\]
Indeed, 
\begin{align*}
C_\alpha(\nu)\to \alpha C, \quad S_\alpha(\nu)\to \alpha S  
&\quad \text{as } \nu\downarrow \mu-\pi, \\
C_\alpha(\nu)\to (1-\alpha)C, \quad S_\alpha(\nu)\to (1-\alpha)S  
&\quad \text{as } \nu\uparrow \mu+\pi, 
\end{align*}
and
\[
C\sin(\mu\pm\pi)-S\cos(\mu\pm\pi) = -\,\bigl(C\sin\mu-S\cos\mu\bigr)=0.
\]

\subsection{Geometric interpretation}

By \eqref{ident-fct}, the equation $I_\alpha(\nu,F)=0$ is equivalent to
\[
(C_\alpha(\nu),S_\alpha(\nu)) \cdot (\sin\nu,-\cos\nu)=0,
\]
i.e., $(C_\alpha(\nu),S_\alpha(\nu))$ is orthogonal to $(\sin\nu,-\cos\nu)$.
Since $(\sin\nu,-\cos\nu)$ is obtained by a $90^\circ$ rotation of $(\cos\nu,\sin\nu)$, it follows that
$(C_\alpha(\nu),S_\alpha(\nu))$ is collinear with $(\cos\nu,\sin\nu)$.
At a stationary point with
\[
(C_\alpha(\nu),S_\alpha(\nu)) = \lambda (\cos\nu,\sin\nu),
\]
one has (compare \eqref{ident-derivative} below)
\begin{align*}  
g''_\alpha(\nu) = I'_\alpha(\nu,F) = C_\alpha(\nu)\cos(\nu)+S_\alpha(\nu)\sin(\nu) = \lambda.
\end{align*}
Since $g''_\alpha(\nu)=\lambda$, a strict local minimizer therefore requires $\lambda>0$, so the vectors point in the same direction.
Hence $\mu_\alpha$ satisfies
\[
\frac{(C_\alpha(\mu_\alpha),S_\alpha(\mu_\alpha))}{\sqrt{C_\alpha(\mu_\alpha)^2+S_\alpha(\mu_\alpha)^2}}
= (\cos\mu_\alpha,\sin\mu_\alpha),
\]
provided the denominator is nonzero. This representation is the counterpart of \eqref{mu:geo} for $\alpha\neq 1/2$.

\subsection{Uniqueness of the theoretical circular expectile}

The following theorem establishes existence, uniqueness, and basic structural properties of circular expectiles. Since circular expectiles are equivariant under rotations, we may assume without loss of generality that $\mu=0$. In this case, $S=0$ and $R=C\in [0,1].$

Throughout, we assume $0<\alpha<1/2$. The case $1/2<\alpha<1$ is treated in Remark~\ref{rem:upper-expectiles}. 

\begin{theorem} \label{thm:main-lower-expectile}
Let $X$ be a circular random variable with values in $[-\pi,\pi)$ and circular mean
$\mu=0$ (so that $S=\mathbb{E}[\sin X]=0$), and let $C=\mathbb{E}[\cos X]>0$.
Assume that $X$ has a density $f$ which is positive a.e.\ on $(-\pi,\pi)$.
For $\alpha\in(0,1/2)$, define
\[
C_\alpha(\nu)=\alpha C+(1-2\alpha)C_\nu^-,
\qquad
S_\alpha(\nu)=(1-2\alpha)S_\nu^-,
\]
where
\[
C_\nu^- = \Ex[\cos(X)\mathbf 1_{\{X\le \nu\}}],
\qquad
S_\nu^- = \Ex[\sin(X)\mathbf 1_{\{X\le \nu\}}],
\]
and let
\[
I_\alpha(\nu,F)=C_\alpha(\nu)\sin\nu-S_\alpha(\nu)\cos\nu,
\qquad \nu\in(-\pi,\pi).
\]
Then the following hold.

\begin{enumerate}[label=(\alph*)]
\item 
For every $\alpha\in(0,1/2)$, the equation $I_\alpha(\nu,F)=0$ has a unique solution $q(\alpha)\in(-\pi,0)$. Moreover, $q(\alpha)$ is the unique minimizer of $g_{\alpha}$ or, in terms of the expectile notation, $\mathcal{E}_{\alpha}=\{q(\alpha)\}$.
\item 
The function $q:(0,1/2)\to(-\pi,0)$ is strictly increasing.
\item 
Define
\[
\alpha^*:=\inf\Bigl\{\alpha\in[0,1/2]:\, C_\alpha(-\pi/2)\ge 0\Bigr\}.
\]
Then $\alpha^*\in(0,1/2)$ is given by
\[
\alpha^*=\frac{-\,C^-_{-\pi/2}}{\,C-2C^-_{-\pi/2}\,}.
\]
Moreover, $q(\alpha)\in(-\pi,-\pi/2)$ for $\alpha<\alpha^*$, $q(\alpha^*)=-\pi/2$,
and $q(\alpha)\in(-\pi/2,0)$ for $\alpha>\alpha^*$.
\item 
One has $\lim_{\alpha\uparrow 1/2} q(\alpha)=0$ and 
$\lim_{\alpha\downarrow 0} q(\alpha)=-\pi$.
\end{enumerate}
\end{theorem}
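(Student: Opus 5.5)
The plan is to treat $I_\alpha(\nu,F)=0$ as the statement that the vector $V(\nu):=(C_\alpha(\nu),S_\alpha(\nu))$ is collinear with $u(\nu):=(\cos\nu,\sin\nu)$, and to track the angle between them. Two preliminary facts are needed.

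First, since $f>0$ a.e. and $\mu=0$, we have $S^-_\nu<0$ for every $\nu\in(-\pi,\pi)$. For $\nu\le 0$ this holds because $\sin<0$ on $(-\pi,0)$. For $\nu>0$ it follows from $S^-_\nu=-\Ex[\sin X\,\I_{\{X>\nu\}}]$, using $S=0$. Hence $S_\alpha(\nu)<0$, so $V(\nu)\neq0$ and $V(\nu)$ lies in the open lower half-plane.

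Second, differentiating the partial moments gives
\[
V'(\nu)=(1-2\alpha)f(\nu)\,u(\nu).
\]
The density terms therefore cancel in $I_\alpha'$, and
\[
g_\alpha''(\nu)=I_\alpha'(\nu,F)=V(\nu)\cdot u(\nu).
\]
This is the formula quoted in the geometric-interpretation subsection.

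\textbf{Part (a).} Let $\beta(\nu)\in(-\pi,0)$ be the continuous argument of $V(\nu)$, and set $\psi(\nu)=\beta(\nu)-\nu$. Then $I_\alpha=|V|\sin(\nu-\beta)=-|V|\sin\psi$. From $V'=(1-2\alpha)f\,u$ one computes
\[
\psi'(\nu)=-\frac{(1-2\alpha)f(\nu)}{|V(\nu)|}\sin\psi(\nu)-1,
\]
interpreted in the absolutely continuous sense. Wherever $\sin\psi=0$ we get $\psi'=-1$, so $\psi$ can cross each level $k\pi$ only downward, and at most once.

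The boundary behaviour is $\psi(-\pi^+)=\arg(\alpha C,0^-)+\pi=\pi$ and $\psi(\pi^-)=0-\pi=-\pi$. Hence $\psi$ crosses the level $0$ exactly once in the interior. At that crossing $V=\lambda u$ with $\lambda=|V|>0$, which is a strict local minimum of $g_\alpha$. Levels $\pm\pi$ are not attained in the interior, because $\psi$ is pinned by the monotone-crossing property.

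For $\nu\in[0,\pi)$ we have $\beta\in(-\pi,0)$, so $\psi\in(-2\pi,0)$ and $\psi=0$ is impossible. The unique zero $q(\alpha)$ therefore lies in $(-\pi,0)$. Since $g_\alpha$ has a single interior critical point and $g_\alpha$ is smaller at it than at both boundary values $\alpha(1+R)$ and $(1-\alpha)(1+R)$, it is the global minimizer, and $\mathcal E_\alpha=\{q(\alpha)\}$.

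The main obstacle is the bookkeeping of the argument and winding of $V$ near the endpoints, where $|V|$ is small only if $C$ is, together with justifying the ODE for $\psi$ when $f$ is only a density. I would handle the latter by working with the integrated form of $V$.

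\textbf{Part (b).} Use the implicit function theorem: $q'(\alpha)=-\partial_\alpha I_\alpha/\partial_\nu I_\alpha$ at $\nu=q(\alpha)$, and $\partial_\nu I_\alpha=|V|>0$ there. By linearity in $\alpha$,
\[
\partial_\alpha I_\alpha(\nu,F)=\Ex\bigl[\sin(\nu-X)(\I_{\{X>\nu\}}-\I_{\{X\le\nu\}})\bigr].
\]
At a zero, $\alpha A+(1-\alpha)B=0$, where $A$ and $B$ are the upper and lower pieces, so $\partial_\alpha I=A-B=B\bigl(1-\tfrac{1-\alpha}{\alpha}\bigr)\cdot(\ldots)$. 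I expect this reduces to a sign of $B=\Ex[\sin(q-X)\I_{\{X\le q\}}]$, which is positive for $q\in(-\pi,0)$ after a short argument. An alternative is a comparison argument showing that $I_{\alpha'}(q(\alpha))<0$ for $\alpha'>\alpha$, which together with the sign pattern from (a) forces $q(\alpha')>q(\alpha)$.

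\textbf{Part (c).} At $\nu=-\pi/2$ one has $I_\alpha(-\pi/2,F)=-C_\alpha(-\pi/2)$. This is affine in $\alpha$, equal to $C^-_{-\pi/2}<0$ at $\alpha=0$ and to $C/2>0$ at $\alpha=1/2$. Solving $C_\alpha(-\pi/2)=0$ gives the stated $\alpha^*\in(0,1/2)$. By (a), $q(\alpha^*)=-\pi/2$, and (b) yields the ordering on either side of $\alpha^*$.

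\textbf{Part (d).} $I_\alpha$ depends continuously on $\alpha$, uniformly on compact subsets of $\nu$. As $\alpha\uparrow1/2$, $I_\alpha\to\tfrac12 C\sin\nu$, whose only zero in $(-\pi,0]$ is $0$. By monotonicity, the limit of $q(\alpha)$ exists, and a compactness argument identifies it as $0$. As $\alpha\downarrow0$, the limit $I_0=\Ex[\sin(\nu-X)\I_{\{X\le\nu\}}]$ is strictly positive on $(-\pi,0)$ by positivity of $f$. So $q(\alpha)$ cannot stay in any compact subinterval of $(-\pi,0)$, and it must tend to $-\pi$.
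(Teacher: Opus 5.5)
Your route through (a) is genuinely different from the paper's and works, but one step is not justified as written.

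\emph{The level $-\pi$ in (a).} You exclude the level $-\pi$ for $\psi$ ``by the monotone-crossing property''. Level $\pi$ is indeed impossible, simply because $\psi=\beta-\nu<\pi$. For $-\pi$, however, the crossing property still allows a single downward crossing at some $\nu_0\in(0,\pi)$. After it one could have $\psi<-\pi$ on $(\nu_0,\pi)$ with $\psi\to-\pi$ from below. Such a crossing would be a second zero of $I_\alpha$, namely a local maximum of $g_\alpha$.

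The missing line is short. Wherever $\psi\in(-2\pi,-\pi)$ you have $\sin\psi>0$, so $\psi'\le-1$ a.e., and $\psi$ cannot converge back to $-\pi$ as $\nu\uparrow\pi$. Equivalently, $I'_\alpha(\nu,F)\to-(1-\alpha)C<0$ and $I_\alpha\to0$ give $I_\alpha>0$ near $\pi$. This is exactly what the paper uses on $(\pi/2,\pi)$.

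\emph{Avoiding the ODE.} You flagged the ODE for $\psi$, whose coefficient $f/|V|$ is merely integrable, as the main obstacle. You can bypass it. $I_\alpha$ is $C^1$ with $I'_\alpha=V\cdot u$, and at any zero $V=\pm|V|u$, so every zero is transversal. A zero is upward iff $\beta=\nu$, which forces $\nu<0$. It is downward iff $\beta=\nu-\pi$, which forces $\nu>0$. Since $I_\alpha<0$ near $-\pi$ and $I_\alpha>0$ near $\pi$, crossings must alternate, and this leaves exactly one zero.

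\emph{Algebra in (b).} From $\alpha A+(1-\alpha)B=0$ one gets $A-B=-B/\alpha$. Your expression $B(1-\frac{1-\alpha}{\alpha})$ is instead $A+B=C\sin q$. The conclusion $\partial_\alpha I_\alpha<0$ still stands, because $B=D(q)>0$. This agrees with the paper's $\partial_\alpha I_\alpha=C\sin q/(1-2\alpha)$.

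\emph{Comparison with the paper.} The paper proves (a) by three cases according to the sign of $C_\alpha(-\pi/2)$. On $(-\pi/2,\pi/2)$ it uses $J_\alpha=I_\alpha/\cos\nu$ with $J'_\alpha=C_\alpha/\cos^2\nu$, and on the outer quarters it uses direct sign or derivative checks. Part (c) is then read off from which case occurs. This case structure also transfers almost verbatim to the piecewise-sinusoidal empirical identification function in Theorem~\ref{prop:empirical}.

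Your winding argument handles all cases at once and makes the geometric interpretation the proof itself. Positivity of $f$ enters your argument for (a) only through the fact that $V$ stays in the open lower half-plane or on the positive real axis. That holds for every density with $C>0$, so your existence--uniqueness argument would also cover Example~\ref{ex:uniform-subarc}. Deriving (c) from the strict monotonicity in (b), rather than from the cases, is equally valid. Your (b) and (d) are essentially the paper's arguments.
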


\begin{proof}
\noindent \emph{Preliminary identities.}
We have 
\[
C_\alpha'(\nu)=(1-2\alpha)\cos(\nu)f(\nu), \qquad
S_\alpha'(\nu)=(1-2\alpha)\sin(\nu)f(\nu),
\]
and
\begin{align}  
I'_\alpha(\nu,F)=C_\alpha(\nu)\cos(\nu)+S_\alpha(\nu)\sin(\nu). \label{ident-derivative}
\end{align}
The identities for $C_\alpha'$ and $S_\alpha'$ hold a.e.; the displayed expression for $I'_\alpha$ follows everywhere by differentiating the representation of $I_\alpha$ in  \eqref{ident-fct}, since the moving-boundary term vanishes.

Since $(S_\nu^-)'=\sin(\nu)f(\nu)$ and $f>0$ a.e., the function $S_\nu^-$ is strictly decreasing on $(-\pi,0)$, strictly increasing on $(0,\pi)$, and satisfies $S^-_{-\pi}=S^-_\pi=0$. Hence
\begin{align}
S_\alpha(\nu)<0 \qquad \text{for all }\nu\in(-\pi,\pi). \label{unique-1}
\end{align}
Moreover, $C_\alpha$ is strictly decreasing on $(-\pi,-\pi/2)$, strictly increasing on $(-\pi/2,\pi/2)$, and strictly decreasing on $(\pi/2,\pi)$.

\begin{enumerate}[label=(\alph*)]
\item \emph{Existence and uniqueness.}

On the interval $(-\pi/2,\pi/2)$, define
\[
J_\alpha(\nu):=\frac{I_\alpha(\nu,F)}{\cos(\nu)}
= C_\alpha(\nu)\tan(\nu)-S_\alpha(\nu).
\]
Since $\cos(\nu)>0$ there, $I_\alpha$ and $J_\alpha$ have the same zeros. A direct
calculation yields
\begin{align*}
J'_\alpha(\nu)=\frac{C_\alpha(\nu)}{\cos^2(\nu)}.
\end{align*}

\medskip\noindent
\emph{Case 1: $C_\alpha(-\pi/2)>0$.}

Then $C_\alpha(\nu)>0$ for all $\nu\in(-\pi,\pi)$, and hence $J_\alpha$ is strictly increasing on $(-\pi/2,\pi/2)$. Since
\[
J_\alpha(\nu)\to -\infty \quad (\nu\downarrow -\pi/2),
\qquad
J_\alpha(0)=-S_\alpha(0)>0
\]
by \eqref{unique-1}, there exists exactly one zero in $(-\pi/2,0)$.

On $(-\pi,-\pi/2)$, one has $\sin(\nu)<0$, $\cos(\nu)<0$, $C_\alpha(\nu)>0$, and
$S_\alpha(\nu)<0$, hence
\[
I_\alpha(\nu,F)=C_\alpha(\nu)\sin(\nu)-S_\alpha(\nu)\cos(\nu)<0.
\]
On $(0,\pi/2)$, we have $\sin(\nu)>0$, $\cos(\nu)>0$, $C_\alpha(\nu)>0$, and
$S_\alpha(\nu)<0$, hence $I_\alpha(\nu,F)>0$. On $(\pi/2,\pi)$,
\[
I'_\alpha(\nu,F)<0,
\qquad
\lim_{\nu\uparrow\pi} I_\alpha(\nu,F)=0,
\]
so $I_\alpha(\nu,F)>0$ there as well. Hence the zero is unique.

\medskip\noindent
\emph{Case 2: $C_\alpha(-\pi/2)=0$.}

Then $I_\alpha(-\pi/2,F)=0$. Since $C_\alpha$ is strictly increasing on $(-\pi/2,\pi/2)$, we have $C_\alpha(\nu)>0$ for all $\nu\in(-\pi/2,\pi/2)$. On $(-\pi/2,0)$,
\[
I_\alpha'(\nu,F)=C_\alpha(\nu)\cos\nu+S_\alpha(\nu)\sin\nu>0,
\]
because $C_\alpha(\nu)>0$, $\cos\nu>0$, $S_\alpha(\nu)<0$, and $\sin\nu<0$. Hence $I_\alpha(\nu,F)>0$ on $(-\pi/2,0)$. On $(0,\pi/2)$, the representation
\[
I_\alpha(\nu,F)=C_\alpha(\nu)\sin\nu-S_\alpha(\nu)\cos\nu
\]
shows directly that $I_\alpha(\nu,F)>0$. Also $I_\alpha(0,F)=-S_\alpha(0)>0$. Therefore there are no further zeros in $(-\pi/2,\pi/2)$. Exactly the same arguments as in Case~1 show that $I_\alpha(\nu,F)<0$ on $(-\pi,-\pi/2)$ and $I_\alpha(\nu,F)>0$ on $[\pi/2,\pi)$.

\medskip\noindent
\emph{Case 3: $C_\alpha(-\pi/2)<0$.}

Then $C_\alpha$ has a unique zero $\tau_\alpha\in(-\pi,-\pi/2)$ and a unique zero $\sigma_\alpha\in(-\pi/2,\pi/2)$. 

On $(-\pi,\tau_\alpha]$, one has $C_\alpha(\nu)\ge 0$, $\sin(\nu)<0$, $\cos(\nu)<0$,
and $S_\alpha(\nu)<0$, hence $I_\alpha(\nu,F)<0$.

On $(\tau_\alpha,-\pi/2)$, we have $C_\alpha(\nu)<0$, $S_\alpha(\nu)<0$, $\cos(\nu)<0$, and $\sin(\nu)<0$, so $I'_\alpha(\nu,F)>0$.
Moreover,
\[
I_\alpha(\tau_\alpha,F)<0, \qquad I_\alpha(-\pi/2,F)>0,
\]
so there is exactly one zero in $(\tau_\alpha,-\pi/2)$.

On $(-\pi/2,\pi/2)$, $J_\alpha$ decreases on $(-\pi/2,\sigma_\alpha)$ and increases on $(\sigma_\alpha,\pi/2)$. Since
\[
J_\alpha(\sigma_\alpha)=-S_\alpha(\sigma_\alpha)>0,
\]
it follows that $J_\alpha(\nu)>0$ for all $\nu\in(-\pi/2,\pi/2)$, hence no further zero exists. Again, $I_\alpha(\nu,F)>0$ on $(\pi/2,\pi)$. This proves uniqueness.

In all three cases, $I_\alpha(\nu,F)$ changes sign exactly once, from negative to positive, at its unique zero $q(\alpha)\in(-\pi,0)$. Since $g_\alpha'(\nu)=I_\alpha(\nu,F)$, it follows that $q(\alpha)$ is the unique minimizer of $g_\alpha$.

\item \emph{Monotonicity.}

By (a), $q(\alpha)$ is well-defined by $I_\alpha(q(\alpha),F)=0$. At every root,
$I'_\alpha(q(\alpha),F)>0$, including the boundary case $q(\alpha)=-\pi/2$, since
\[
I'_\alpha(-\pi/2,F)=-\,S_\alpha(-\pi/2)>0
\]
by \eqref{unique-1}. Hence the implicit function theorem applies, so $q$ is continuously
differentiable. Moreover, Proposition~\ref{prop:derivative} yields $q'(\alpha)>0$ for all
$\alpha\in(0,1/2)$.

\item \emph{Threshold.}
The derivative of $C_\alpha(-\pi/2)=\alpha C+(1-2\alpha)C^-_{-\pi/2}$ is 
\[
\frac{d}{d\alpha}C_\alpha(-\pi/2)=C-2C^-_{-\pi/2}.
\]
Since $C>0$ and $C^-_{-\pi/2}<0$, $C_\alpha(-\pi/2)$ is strictly increasing in $\alpha$. Moreover,
\[
C_0(-\pi/2)=C^-_{-\pi/2}<0, \quad C_{1/2}(-\pi/2)=C/2>0.
\]
Therefore, there exists a unique $\alpha^*\in(0,1/2)$ such that $C_{\alpha^*}(-\pi/2)=0$, namely
\[
\alpha^*=\frac{-\,C^-_{-\pi/2}}{C-2C^-_{-\pi/2}}.
\]
The remaining assertions follow directly from part~(a): if $\alpha<\alpha^*$, then
$C_\alpha(-\pi/2)<0$, so $q(\alpha)\in(-\pi,-\pi/2)$; if $\alpha=\alpha^*$, then
$q(\alpha)=-\pi/2$; and if $\alpha>\alpha^*$, then $C_\alpha(-\pi/2)>0$, so
$q(\alpha)\in(-\pi/2,0)$.

\item \emph{Limit as $\alpha\uparrow 1/2$.}

Since $q(\alpha)$ is increasing and bounded above by $0$, the limit exists. By continuity,
$I_{1/2}(\nu,F)=C\sin(\nu)/2$, so the only zero in $(-\pi,0]$ is $\nu=0$.

\medskip
\emph{Limit as $\alpha\downarrow 0$.}

Since $q(\alpha)$ is increasing in $\alpha$ and $q(\alpha)\in(-\pi,0)$ for all
$\alpha\in(0,1/2)$, the limit
\[
q_0:=\lim_{\alpha\downarrow 0} q(\alpha)
\]
exists in $[-\pi,0)$. We claim that $q_0=-\pi$.

Suppose, to the contrary, that $q_0>-\pi$. By continuity of $I_\alpha(\nu,F)$ in
$(\alpha,\nu)$, we have $I_0(q_0,F)=0$. But
\[
I_0(q_0,F)=\int_{-\pi}^{q_0}\sin(q_0-x)f(x)\,dx > 0,
\]
since $f>0$ a.e.\ on $(-\pi,q_0)$. This is a contradiction. Hence $q_0=-\pi$.

%
\end{enumerate}
\end{proof}

\begin{remark}[Upper circular expectiles] \label{rem:upper-expectiles}
For $\alpha\in(1/2,1)$ the result follows by reflection. Let $Y=-X$ and denote its distribution function by $F_Y$. Then $Y$ has circular mean $0$ and satisfies the assumptions of Theorem~\ref{thm:main-lower-expectile}. Moreover,
\[
I_\alpha(\nu,F_X)=-I_{1-\alpha}(-\nu,F_Y).
\]
Hence $\nu$ solves $I_\alpha(\nu,F_X)=0$ if and only if $-\nu$ solves
$I_{1-\alpha}(\cdot,F_Y)=0$. Applying Theorem~\ref{thm:main-lower-expectile} to $Y$ with
parameter $1-\alpha\in(0,1/2)$ shows that $I_\alpha(\nu,F_X)=0$ has a
unique solution $q(\alpha)\in(0,\pi)$, which is the unique minimizer of
$g_\alpha$.

More precisely, let $\alpha_Y^*$ be the threshold from Theorem~\ref{thm:main-lower-expectile} applied to $Y$, and set $\widetilde\alpha^*:=1-\alpha_Y^*$. Then
\[
q(\alpha)\in(0,\pi/2) \;\; \text{for } \alpha<\widetilde\alpha^*, \qquad
q(\widetilde\alpha^*)=\pi/2, \qquad  
q(\alpha)\in(\pi/2,\pi) \;\; \text{for } \alpha>\widetilde\alpha^*.
\]
Moreover,
\[
\lim_{\alpha\downarrow1/2}q(\alpha)=0,
\qquad
\lim_{\alpha\uparrow1}q(\alpha)=\pi .
\]
\end{remark}

The following proposition gives the derivative of the circular expectile.

\begin{proposition}
\label{prop:derivative}
Let $\alpha\in(0,1/2)$, and let $q(\alpha)\in(-\pi,0)$ denote the unique solution of
$I_\alpha(\nu,F)=0$. Then $q$ is continuously differentiable and
\begin{align} \label{expect-deriv}
q'(\alpha) = \frac{2D\left(q(\alpha)\right)-C\sin q(\alpha)}{I_\alpha'(q(\alpha),F)},
\end{align}
where $D(\nu)=C_\nu^-\sin\nu - S_\nu^-\cos\nu$, and $I'_\alpha(\nu,F)$ is given by \eqref{ident-derivative}. Equivalently, one has the representation
\begin{align} \label{expect-deriv2}
q'(\alpha) = \frac{-C \sin q(\alpha)}{(1-2\alpha)\,I_\alpha'(q(\alpha),F)}.
\end{align}
In particular, $q'(\alpha)>0$ for all $\alpha\in(0,1/2)$.
\end{proposition}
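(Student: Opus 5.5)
We would prove the proposition by the implicit function theorem applied to $H(\alpha,\nu):=I_\alpha(\nu,F)$ on $(0,1/2)\times(-\pi,\pi)$, followed by a simplification of the resulting quotient using the defining equation of the root.

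First, note that $I_\alpha(\nu,F)=C_\alpha(\nu)\sin\nu-S_\alpha(\nu)\cos\nu$ with $C_\alpha(\nu)=\alpha C+(1-2\alpha)C_\nu^-$ and $S_\alpha(\nu)=(1-2\alpha)S_\nu^-$ (recall $S=0$). Hence $H$ is affine in $\alpha$:
\[
H(\alpha,\nu)=\alpha C\sin\nu+(1-2\alpha)D(\nu),
\]
where $D(\nu)=C_\nu^-\sin\nu-S_\nu^-\cos\nu$. Therefore
\[
\partial_\alpha H(\alpha,\nu)=C\sin\nu-2D(\nu).
\]
This partial derivative is jointly continuous, since $C_\nu^-$ and $S_\nu^-$ are continuous in $\nu$ (they are integrals against a density).

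Next, $\partial_\nu H(\alpha,\nu)=I'_\alpha(\nu,F)=C_\alpha(\nu)\cos\nu+S_\alpha(\nu)\sin\nu$ by \eqref{ident-derivative}, where the boundary terms cancel. This expression is continuous in $(\alpha,\nu)$, so $H$ is $C^1$. In the proof of Theorem~\ref{thm:main-lower-expectile}(b) it was noted that $I'_\alpha(q(\alpha),F)>0$ at every root. I would verify this directly from the case analysis of part (a), since the root is a transversal sign change from negative to positive. A cleaner route avoids that analysis. At a root, $(C_\alpha,S_\alpha)=\lambda(\cos q,\sin q)$ with $\lambda=I'_\alpha$. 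Because $S_\alpha(q)<0$ by \eqref{unique-1} and $\sin q<0$ for $q\in(-\pi,0)$, we get $\lambda=S_\alpha(q)/\sin q>0$. Uniqueness of the root from part (a) then lets the implicit function theorem give a local $C^1$ solution that coincides with $q$. Hence $q\in C^1$ and
\[
q'(\alpha)=-\frac{\partial_\alpha H}{\partial_\nu H}=\frac{2D(q)-C\sin q}{I'_\alpha(q,F)},
\]
which is \eqref{expect-deriv}.

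For \eqref{expect-deriv2}, use $H(\alpha,q)=0$, which gives $(1-2\alpha)D(q)=-\alpha C\sin q$, that is, $D(q)=-\alpha C\sin q/(1-2\alpha)$. Substituting this into the numerator gives
\[
2D(q)-C\sin q=-C\sin q\left(\tfrac{2\alpha}{1-2\alpha}+1\right)=\frac{-C\sin q}{1-2\alpha}.
\]
Positivity then follows: $C>0$, $\sin q(\alpha)<0$ because $q(\alpha)\in(-\pi,0)$, $1-2\alpha>0$, and the denominator $I'_\alpha(q,F)=\lambda>0$. The only delicate point is the strict positivity of $I'_\alpha$ at the root, including the case $q=-\pi/2$. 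The collinearity argument above handles all cases uniformly, since $\sin q\neq0$.
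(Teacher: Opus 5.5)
Your proof is correct and follows the paper's argument: you apply the implicit function theorem to $I_\alpha(\nu,F)=\alpha C\sin\nu+(1-2\alpha)D(\nu)$, then substitute $D(q)=-\alpha C\sin q/(1-2\alpha)$ from the root equation to get the second formula. The one real difference is how you show $I'_\alpha(q(\alpha),F)>0$. The paper appeals to the case analysis in the proof of Theorem~\ref{thm:main-lower-expectile}(a) and handles $q=-\pi/2$ separately, whereas your collinearity identity $I'_\alpha(q,F)=S_\alpha(q)/\sin q>0$ covers all cases at once and is the tidier justification.
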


\begin{proof}
By Theorem~\ref{thm:main-lower-expectile}(a), the equation $I_\alpha(q(\alpha),F)=0$
defines $q(\alpha)$ uniquely for $\alpha\in(0,1/2)$. Moreover, by the proof of
Theorem~\ref{thm:main-lower-expectile}(a), one has $I'_\alpha(q(\alpha),F)>0$.
Hence the implicit function theorem implies that $q$ is continuously differentiable and
\begin{align} \label{qprime}
q'(\alpha) = -\frac{\partial_\alpha I_\alpha(q(\alpha),F)}{I'_\alpha(q(\alpha),F)}.
\end{align}
Writing $I_\alpha(\nu,F)=\alpha C\sin\nu+(1-2\alpha)D(\nu)$, 
where $D(\nu):=C_\nu^-\sin\nu-S_\nu^-\cos\nu$, we obtain
\[
\partial_\alpha I_\alpha(\nu,F)=C\sin\nu-2D(\nu).
\]
Evaluating at $\nu=q(\alpha)$ gives \eqref{expect-deriv}.
Using $I_\alpha(q(\alpha),F)=0$, we have
\[
D(q(\alpha))=-\frac{\alpha}{1-2\alpha}C\sin q(\alpha),
\]
and therefore
\[
\partial_\alpha I_\alpha(q(\alpha),F) = \frac{C\sin q(\alpha)}{1-2\alpha}<0.
\]
Substituting this into \eqref{qprime} yields \eqref{expect-deriv2}.
Since $I'_\alpha(q(\alpha),F)>0$, it follows that $q'(\alpha)>0$.
\end{proof}

\begin{remark}
The derivative formula for circular expectiles has the same implicit-function structure as in the linear case: it is obtained as the negative ratio of the derivative of the identification function with respect to the asymmetry parameter and its derivative with respect to the location parameter.

In the linear setting (see Proposition~1(iii) in \citet{HK:2016}), this yields a ratio involving integrated tail functionals in the numerator and a local cdf-based weight in the denominator. In the circular setting, these quantities are replaced by trigonometric moment terms and the local slope
$I_\alpha'(q(\alpha),F)$ at the root. 
\end{remark}

\begin{example}[Uniform distribution on an arc]
\label{ex:uniform-subarc}
Let $X\sim \mathrm{Unif}(-d,d)$ with $0<d<\pi$. Then $X$ is a circular random variable
whose support is a strict subset of $[-\pi,\pi)$, and its circular mean is $\mu=0$.
In this case, $C=\sin d/d$, and, for $\nu\in (-d,d)$,
\[
C_\nu^-= \frac{\sin \nu+\sin d}{2d},  \qquad
S_\nu^-= \frac{\cos d-\cos \nu}{2d}.
\]
Therefore, on the support $(-d,d)$,
\begin{align*}
I_\alpha(\nu,F) &= \alpha C \sin\nu + (1-2\alpha) \left(C_\nu^-\sin\nu-S_\nu^-\cos\nu \right) \\ 
 &= \frac{1}{2d}\left(\sin d\,\sin \nu + (1-2\alpha)\left(1-\cos d\,\cos \nu\right)\right).
\end{align*}
Introducing the substitution
\[
t=\tan\frac{\nu}{2}, \quad \sin\nu=\frac{2t}{1+t^2}, \quad \cos\nu=\frac{1-t^2}{1+t^2},
\]
and multiplying by $1+t^2$, the equation $I_\alpha(\nu,F)=0$ is equivalent to
\[
2t\sin d + (1-2\alpha) \left( (1-\cos d) + (1+\cos d)t^2 \right) = 0.
\]
Using $\sin d=2\sin(d/2)\cos(d/2),$ $1-\cos d=2\sin^2(d/2),$ $1+\cos d=2\cos^2(d/2)$, and dividing by $2\cos^2(d/2)$, we obtain the quadratic equation
\[
(1-2\alpha)t^2 + 2t\tan(d/2) + (1-2\alpha)\tan^2(d/2) = 0.
\]
Solving the quadratic equation gives
\[
t = \tan\frac{\nu}{2} 
= \frac{\sqrt{\alpha}-\sqrt{1-\alpha}}{\sqrt{\alpha}+\sqrt{1-\alpha}} \ \tan\frac d2,
\]
where we select the root corresponding to $\nu\in(-d,d)$. Hence
\[
\mu_\alpha = 2\arctan\left( 
    \frac{\sqrt{\alpha}-\sqrt{1-\alpha}}{\sqrt{\alpha}+\sqrt{1-\alpha}} \ \tan\frac d2 \right)
\]
for $\alpha\in(0,1)$. In particular, $\mu_\alpha\in(-d,d)$, $\lim_{\alpha\downarrow 0} \mu_\alpha=-d$, $\lim_{\alpha\uparrow 1} \mu_\alpha=d$, and $\mu_{1-\alpha}=-\mu_\alpha$ by symmetry.

There are no further zeros outside the support: for $\nu<-d$, $I_\alpha(\nu,F)=\alpha C\sin\nu<0,$
whereas for $\nu>d$, $I_\alpha(\nu,F)=(1-\alpha)C\sin\nu>0$.
Hence the zero displayed above is the unique minimizer.

Note that the density of $X$ is not strictly positive on $(-\pi,\pi)$, so the assumptions of Theorem~\ref{thm:main-lower-expectile} are not satisfied. Nevertheless, the circular expectile is uniquely defined, illustrating that the full-support assumption is not essential in general.

Moreover, for $d>\pi/2$, we obtain $C_{-\pi/2}^-=(\sin d-1)/(2d)$. Thus, the threshold parameter from
Theorem~\ref{thm:main-lower-expectile} is given by $\alpha^*=(1-\sin d)/2\in(0,1/2)$.
A direct calculation shows that
\[
0<\alpha<\alpha^* \;\Rightarrow\; \mu_\alpha\in(-d,-\pi/2),
\qquad
\alpha^*<\alpha<1/2 \;\Rightarrow\; \mu_\alpha\in(-\pi/2,0),
\]
so the qualitative behavior described in Theorem~\ref{thm:main-lower-expectile}
remains valid despite the lack of full support.
\end{example}


\section{Empirical circular expectiles}

For an iid sample $\theta=(\theta_1,\ldots,\theta_n), n\geq 2,$ from an absolutely continuous circular distribution on $[-\pi,\pi)$, let $\hat\mu_n$ denote the sample circular mean.
We represent the observations in the interval $(\hat\mu_n-\pi,\hat\mu_n+\pi)$ and, for
simplicity, keep the notation $\theta_1,\ldots,\theta_n$ for these representatives.
The empirical circular $\alpha$-expectile set is defined by
\begin{align*}
\widehat{\mathcal E}_{n,\alpha} = \argmin_{\nu \in (\hat\mu_n-\pi,\hat\mu_n+\pi)} g_{n,\alpha}(\nu,\theta),
\end{align*}
where 
\begin{align*} 
g_{n,\alpha}(\nu,\theta) = \frac1n \sum_{i=1}^n 
  \Big\{ & \alpha (1-\cos(\theta_i-\nu)) \, \I_{\{ \theta_i \in (\nu,\hat\mu_n+\pi) \}} \\  
  & + (1-\alpha) (1-\cos(\theta_i-\nu)) \, \I_{\{ \theta_i \in (\hat\mu_n-\pi,\nu) \}} \Big\}.
\end{align*}
If this set is a singleton, we denote its element by $\hat\mu_{n,\alpha}$.
Note that the definition of $\hat\mu_{n,\alpha}$ depends on the sample circular mean $\hat\mu_n$, which serves as the reference point for the linearization of the circle. This dependence is inherent in the construction, since the notion of ordering on the circle requires a choice of origin.

Such a reference point is standard in the directional setting. For instance, the definition of projection quantiles in \cite{LSV:2014} uses a consistent estimator of a spherical median, and similar constructions appear in other notions of circular quantiles.

\subsection{Empirical identification function}

Define the empirical identification function by
\[
I_{n,\alpha}(\nu,\theta) = \frac1n \sum_{i=1}^n \left(
\alpha \sin(\nu-\theta_i)\,\mathbf{1}_{\{\theta_i\in(\nu,\hat\mu_n+\pi)\}}
+ (1-\alpha)\sin(\nu-\theta_i)\,\mathbf{1}_{\{\theta_i\in(\hat\mu_n-\pi,\nu)\}} \right).
\]
Then the function $g_{n,\alpha}(\nu,\theta)$ is differentiable on $(\hat\mu_n-\pi,\hat\mu_n+\pi)$ and satisfies
\[
\frac{\partial}{\partial \nu} g_{n,\alpha}(\nu,\theta) = I_{n,\alpha}(\nu,\theta).
\]
Away from the sample points, the indicators remain constant as functions of $\nu$,
and differentiation of $1-\cos(\theta_i-\nu)$ yields $\sin(\nu-\theta_i)$.
At the sample points $\nu=\theta_i$, the indicators change discontinuously. 
However, this does not destroy differentiability, since the function
$1-\cos(\theta_i-\nu)$ vanishes quadratically at $\nu=\theta_i$, so that the
corresponding summand has derivative zero at this point from both sides.

Thus, $I_{n,\alpha}(\nu,\theta)$ serves as the empirical identification function. 
For $\alpha\neq 1/2$, the function $g_{n,\alpha}(\nu,\theta)$ is not
twice differentiable at the sample points, reflecting the piecewise structure induced
by the indicators. The proof of the following theorem is provided in the appendix.

\begin{theorem}[Uniqueness of the empirical circular expectile] \label{prop:empirical}
Assume that $\theta_1,\ldots,\theta_n$ are sampled from an absolutely continuous circular distribution. Then, with probability one, the sample circular mean $\hat\mu_n$ is uniquely defined and the ordered sample satisfies
\[
-\pi =: \theta_{(0)} < \theta_{(1)} < \cdots < \theta_{(n)} < \theta_{(n+1)} := \pi.
\]
Without loss of generality, rotate the sample so that $\hat\mu_n=0$. In particular,
\[
C_n:=\frac1n\sum_{i=1}^n \cos(\theta_i)>0, \qquad
S_n:=\frac1n\sum_{i=1}^n \sin(\theta_i)=0.
\]

For $k=0,\ldots,n$, define
\[
C_k^-:=\frac1n\sum_{i=1}^k \cos(\theta_{(i)}), \qquad
S_k^-:=\frac1n\sum_{i=1}^k \sin(\theta_{(i)}),
\]
and
\[
C_\alpha^{(k)}:=\alpha C_n+(1-2\alpha)C_k^-, \qquad
S_\alpha^{(k)}:=(1-2\alpha)S_k^-.
\]
Then, for $\alpha\in(0,1/2)$ and $\nu\in(\theta_{(k)},\theta_{(k+1)})$,
\[
I_{n,\alpha}(\nu)=C_\alpha^{(k)}\sin\nu - S_\alpha^{(k)}\cos\nu.
\]
Moreover, the empirical circular $\alpha$-expectile is uniquely defined. There exists a unique zero $q_n(\alpha)\in(-\pi,0)$ of $I_{n,\alpha}$, and this zero is the unique minimizer of $g_{n,\alpha}(\nu,\theta)$.

Let $k_*$ be the unique index such that $-\pi/2\in(\theta_{(k_*)},\theta_{(k_*+1)})$. 
Then exactly one of the following holds:
\begin{align*}
C_\alpha^{(k_*)}<0  \; &\Rightarrow \; q_n(\alpha)\in(-\pi,-\pi/2), \\
C_\alpha^{(k_*)}=0  \; &\Rightarrow \; q_n(\alpha)=-\pi/2, \\
C_\alpha^{(k_*)}>0  \; &\Rightarrow \; q_n(\alpha)\in(-\pi/2,0).
\end{align*}
\end{theorem}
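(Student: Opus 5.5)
The plan is to mirror the proof of Theorem~\ref{thm:main-lower-expectile}, replacing the smooth partial moments $C_\nu^-,S_\nu^-$ by the step functions $C_k^-,S_k^-$. The argument has three stages: a generic-position setup, a structural description of the piecewise coefficients, and the same three-case sign analysis as before.

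\emph{Almost sure properties.} Absolute continuity gives, with probability one, that the $\theta_i$ are distinct, that $\bar Z_n\neq 0$ (the event $\bar Z_n=0$ is a null set because $(\theta_1,\dots,\theta_n)\mapsto \bar Z_n$ is real-analytic and nonconstant, and $n\ge 2$), that no observation equals $\hat\mu_n\pm\pi$, and that no observation equals $-\pi/2$ after rotation. The last two statements need a short argument, since $\hat\mu_n$ depends on the sample. I would handle them by conditioning on all but one observation and using that the event $\theta_i=\hat\mu_n\pm\pi/2$ or $\theta_i=\hat\mu_n\pm\pi$ forces $\theta_i$ into a finite set. After rotation, $S_n=0$ and $C_n=|\bar Z_n|>0$.

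\emph{Piecewise representation.} For $\nu\in(\theta_{(k)},\theta_{(k+1)})$ the indicators are fixed. Expanding $\sin(\nu-\theta_i)$ and using $C_n-C_k^-$, $S_n-S_k^-$ for the upper sum gives
\[
I_{n,\alpha}(\nu)=C_\alpha^{(k)}\sin\nu-S_\alpha^{(k)}\cos\nu .
\]
This is a continuous function of $\nu$ on all of $(-\pi,\pi)$, because the jump of the $i$th coefficient at $\theta_{(i)}$ multiplies $\sin(\nu-\theta_{(i)})$, which vanishes there.

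\emph{Sign of the coefficients.} The key facts are the following.
\begin{enumerate}[label=(\roman*)]
\item $S_k^-<0$ for $1\le k\le n-1$ and $S_0^-=S_n^-=0$, since the partial sums of $\sin\theta_{(i)}$ first decrease (negative terms) and then increase to $S_n=0$. Consequently $S_\alpha^{(k)}\le 0$, with strict inequality for interior $k$.
\item $k\mapsto C_\alpha^{(k)}$ decreases while $\theta_{(k)}<-\pi/2$, increases on $(-\pi/2,\pi/2)$, and decreases afterwards.
\end{enumerate}

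\emph{Sign analysis.} On each interval, $I_{n,\alpha}$ is a single sinusoid $\rho_k\sin(\nu-\beta_k)$. I would split into the same three cases according to the sign of $C_\alpha^{(k_*)}$, which plays the role of $C_\alpha(-\pi/2)$.
\begin{enumerate}[label=(\roman*)]
\item On $(0,\pi)$: we have $I_{n,\alpha}>0$. For $\nu\in(0,\pi/2]$ this follows from the sign pattern whenever $C_\alpha^{(k)}\ge 0$. On $(\pi/2,\pi)$ it follows because near $\pi$ the last piece is $(1-\alpha)C_n\sin\nu>0$, and the piecewise derivative $C_\alpha^{(k)}\cos\nu+S_\alpha^{(k)}\sin\nu<0$ there; this needs $C_\alpha^{(k)}>0$ on that range, which holds by monotonicity, since the coefficients decrease from their value at $\pi/2$ to $(1-\alpha)C_n>0$. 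On $(0,\pi/2]$ I would use $C_\alpha^{(k)}\ge\alpha C_n+(1-2\alpha)C^{(\cdot)}$ at the minimum near $-\pi/2$, or instead $J$-type arguments if that value is negative.
\item On $(-\pi/2,\pi/2)$: the quotient $J(\nu)=C_\alpha^{(k)}\tan\nu-S_\alpha^{(k)}$ is continuous, has derivative $C_\alpha^{(k)}/\cos^2\nu$, and satisfies $J(0)=-S_\alpha^{(k)}>0$. Hence $J$ is unimodal: it decreases then increases. Its minimum is either $-\infty$ at $-\pi/2$ (Case~1, giving one zero in $(-\pi/2,0)$) or is attained where $C_\alpha^{(k)}$ changes sign, where $J=-S_\alpha^{(k)}>0$ (Case~3, giving no zero).
\item On $(-\pi,-\pi/2)$: when $C\ge0$ the sign pattern gives $I<0$. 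In Case~3 there is a single sign change, because the piecewise derivative is positive once $C_\alpha^{(k)}<0$, and $I$ is positive at $-\pi/2$.
\item Case~2: $C_\alpha^{(k_*)}=0$ gives $I(-\pi/2)=0$ exactly.
\end{enumerate}
In each case $I_{n,\alpha}$ changes sign once, from negative to positive. Since $g_{n,\alpha}$ is $C^1$ with derivative $I_{n,\alpha}$, the zero is the unique minimizer, and the trichotomy follows from the case in which it arises.

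\emph{Main obstacle.} The delicate step is the region $(0,\pi/2)$ in Case~3, where $C_\alpha^{(k)}$ may be negative for small positive $\nu$. I would control it through $J$, which is continuous and piecewise differentiable, rather than by direct sign arguments. The other point needing care is that the coefficients jump at the order statistics; this is harmless because $I_{n,\alpha}$ is continuous and each monotonicity argument is applied to a continuous, piecewise $C^1$ function.
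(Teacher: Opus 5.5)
Your route is the paper's: the piecewise sinusoid representation, $S_\alpha^{(k)}<0$ for interior $k$, the monotonicity pattern of $k\mapsto C_\alpha^{(k)}$, the quotient $J=I_{n,\alpha}/\cos\nu$ on $(-\pi/2,\pi/2)$, and three cases according to the sign of $C_\alpha^{(k_*)}$. Your tail arguments on $(-\pi,-\pi/2)$ and $(\pi/2,\pi)$ coincide with the paper's. Your almost-sure preliminaries are more careful than the paper, which takes them for granted: you condition on the other observations to rule out $\theta_i=\hat\mu_n\pm\pi/2$ and $\theta_i=\hat\mu_n\pm\pi$.

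One step in Case~3 fails as written, however. You claim that the minimum of $J$ is attained where $C_\alpha^{(k)}$ changes sign, and that there $J=-S_\alpha^{(k)}>0$. That is the population argument: $J_\alpha(\sigma_\alpha)=-S_\alpha(\sigma_\alpha)$ because $C_\alpha(\sigma_\alpha)=0$. In the empirical case the coefficient generically does not vanish. It jumps from $C_\alpha^{(r-1)}<0$ to $C_\alpha^{(r)}>0$ at a sample point $\theta_{(r)}$, and the minimum is
\[
J(\theta_{(r)})=C_\alpha^{(r-1)}\tan\theta_{(r)}-S_\alpha^{(r-1)}=C_\alpha^{(r)}\tan\theta_{(r)}-S_\alpha^{(r)},
\]
which is not $-S$. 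Only the non-generic situation $C_\alpha^{(k)}=0$ on a whole interval gives $J=-S_\alpha^{(k)}$. Exactly one of the two products $C\tan\theta_{(r)}$ above is negative, so positivity is not automatic.

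The missing idea is to evaluate $J(\theta_{(r)})$ with the representation whose coefficient has the same sign as $\tan\theta_{(r)}$:
\begin{itemize}
\item If $\theta_{(r)}<0$, use the left one: $C_\alpha^{(r-1)}<0$, $\tan\theta_{(r)}<0$, and $-S_\alpha^{(r-1)}>0$, since $1\le k_*\le r-1\le n-1$.
\item If $\theta_{(r)}>0$, use the right one: $C_\alpha^{(r)}>0$, $\tan\theta_{(r)}>0$, and $-S_\alpha^{(r)}\ge 0$.
\end{itemize}
Either way $J(\theta_{(r)})>0$, which is exactly the paper's argument. The subcase $\theta_{(r)}>0$ is precisely the ``delicate'' region $(0,\pi/2)$ you flagged, so with this repair the rest of your proof goes through.
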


\begin{remark}[Empirical upper circular expectiles]
The case $\alpha\in(1/2,1)$ follows by reflection. After centering at $\hat\mu_n=0$, let $\eta_i=-\theta_i$. Then
\[
I_{n,\alpha}(\nu;\theta) = - I_{n,1-\alpha}(-\nu;\eta).
\]
Hence the equation $I_{n,\alpha}(\nu;\theta)=0$ has a unique solution $q_n(\alpha)\in(0,\pi)$, and this solution is the unique minimizer of $g_{n,\alpha}$.
\end{remark}

\begin{remark}
Let $q_n(\alpha)$ denote the empirical circular $\alpha$-expectile after centering the
sample so that $\hat\mu_n=0$, and let $-\pi<\theta_{(1)}<\cdots<\theta_{(n)}<\pi$ be the ordered sample. Then
\[
\lim_{\alpha\downarrow 0} q_n(\alpha)=\theta_{(1)}, \qquad
\lim_{\alpha\uparrow 1} q_n(\alpha)=\theta_{(n)}.
\]
This follows from the same sign-change argument: as $\alpha\downarrow 0$, the unique zero is forced into the first interval $(\theta_{(1)},\theta_{(2)})$ and converges to $\theta_{(1)}$; the upper
limit follows by reflection.
Thus, in the empirical case, the extreme circular expectiles converge to the extreme
sample points in the linearized order.
\end{remark}

\begin{remark}[Computation of the empirical circular expectile]
\label{rem:empirical-algorithm}
After rotating the sample so that $\hat\mu_n=0$, let
$-\pi=\theta_{(0)}<\theta_{(1)}<\cdots<\theta_{(n)}<\theta_{(n+1)}=\pi$ be the ordered sample. On each interval $(\theta_{(k)},\theta_{(k+1)})$,
\[
I_{n,\alpha}(\nu) = C_{\alpha}^{(k)}\sin\nu - S_{\alpha}^{(k)}\cos\nu .
\]
For $k=1,\ldots,n-1$, one has $S_\alpha^{(k)}<0$, and hence
\[
\phi_k = \operatorname{atan2}\!\left(S_\alpha^{(k)},C_\alpha^{(k)}\right) \in(-\pi,0).
\]
The zeros of $I_{n,\alpha}$ on the interval $(\theta_{(k)},\theta_{(k+1)})$ are the points satisfying
\[
\nu\equiv \phi_k \pmod{\pi}.
\]
By Theorem~\ref{prop:empirical}, for $\alpha\in(0,1/2)$ there is a unique zero in $(- \pi,0)$. Hence there is a unique index $k$ such that
\[
\phi_k\in(\theta_{(k)},\theta_{(k+1)}),
\]
and this value is the empirical circular $\alpha$-expectile. The upper case $\alpha\in(1/2,1)$ follows by reflection. The resulting algorithm has complexity $O(n)$ for already ordered data.
\end{remark}

Figure \ref{fig1} shows the empirical lower and upper circular $\alpha$-expectiles for several values of $\alpha$ for the von Mises distribution $vM(0,\kappa)$ with $\kappa=3$ and $\kappa=1$.

\begin{figure}
\centering
\includegraphics[width=0.9\textwidth]{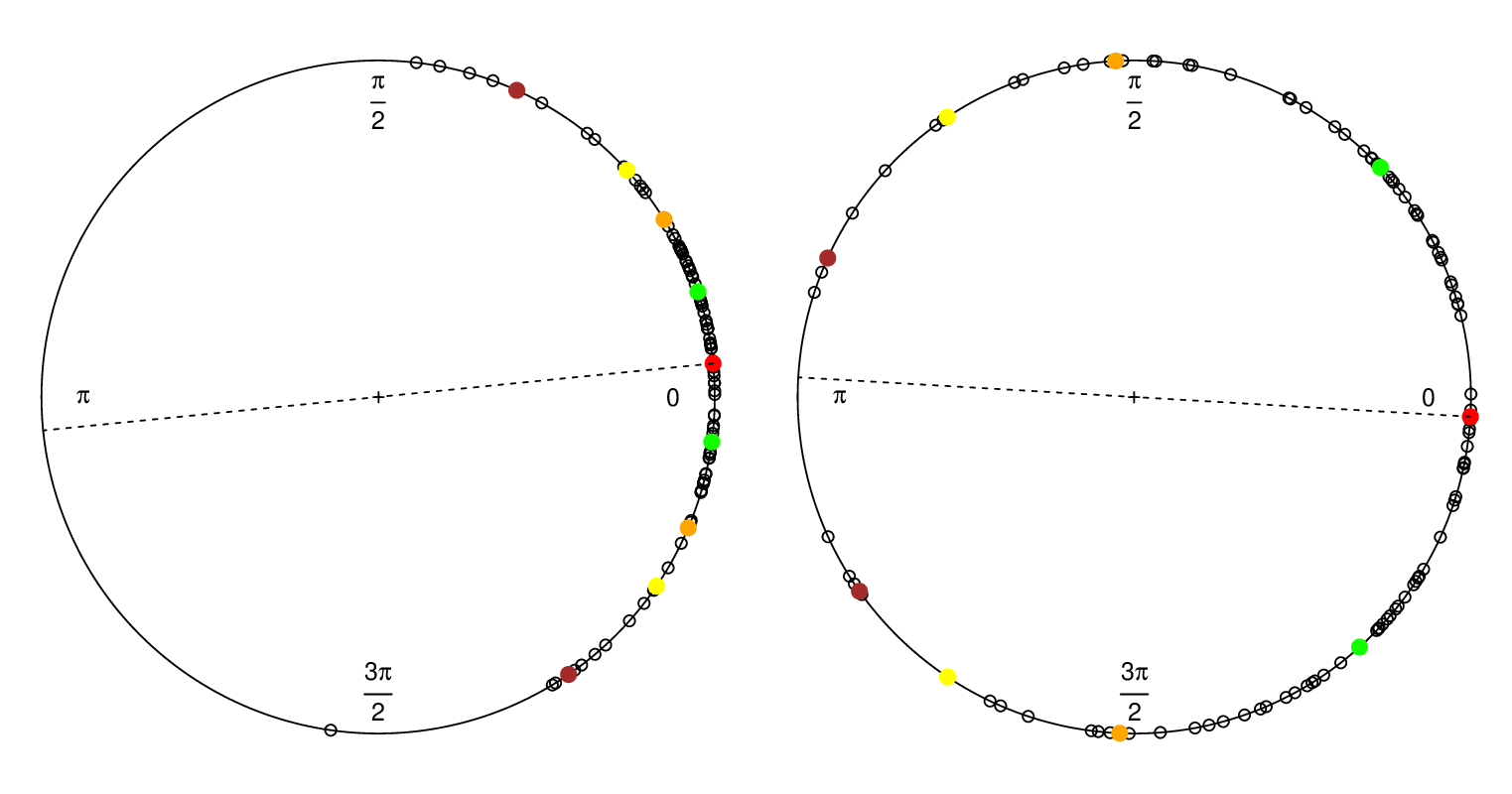}
\caption{Empirical lower and upper circular $\alpha$-expectiles for $\alpha=0.5$ (red), $\alpha=0.25$ (green), 0.1 (orange), 0.05 (yellow), 0.01 (brown) for samples of size 100 from the von Mises distribution $vM(0,\kappa)$ with $\kappa=3$ (left) and $\kappa=1$ (right).} \label{fig1}
\end{figure}


\section{Asymptotic properties of empirical circular expectiles}

First, we present a consistency result for the empirical  circular $\alpha$-expectile. The proof, which is deferred to the appendix, employs a sign-change argument similar to that in Lemma 5.10 of \cite{VV:1998}. \cite{HK:2016} use the same sign-based consistency argument for linear expectiles. Additionally, it must be shown that replacing the true branch center $\mu$ by $\hat\mu_n$ does not affect the empirical identification function asymptotically.

\begin{proposition}[Consistency of the empirical circular expectile] \label{consistency}
Let $X_1,X_2,\ldots$ be iid from an absolutely continuous circular
distribution. Assume $R:=|\Ex e^{iX}|>0$, and let $\mu$ be the unique circular mean.
Let $\alpha\in(0,1/2)$, and suppose that the population circular
$\alpha$-expectile $\mu_\alpha\in(\mu-\pi,\mu+\pi)$ is the unique zero of
$I_\alpha(\nu,F)$, with sign change from negative to positive.
Let $\hat\mu_{n,\alpha}$ denote the empirical circular $\alpha$-expectile
defined in Section~3. Define its $\mu$-centered representative by
\[
\hat\mu_{n,\alpha}^{(\mu)} := \mu+\operatorname{Arg}_{(-\pi,\pi)}
\left(e^{i(\hat\mu_{n,\alpha}-\mu)}\right).
\]
Then
\[
\hat\mu_{n,\alpha}^{(\mu)}\to\mu_\alpha \quad\text{a.s.}
\]
\end{proposition}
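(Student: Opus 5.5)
We use a sign-change argument in the spirit of Lemma~5.10 of \cite{VV:1998}. By rotation equivariance we may assume $\mu=0$, and since $\hat\mu_n\to\mu$ a.s.\ by the strong law of large numbers applied to $\bar Z_n\to\Ex e^{iX}\neq 0$, we may work on the event where $|\hat\mu_n|<\delta$ for all large $n$. Fix $\varepsilon>0$ small enough that $[\mu_\alpha-\varepsilon,\mu_\alpha+\varepsilon]\subset(-\pi,\pi)$. Because $\mu_\alpha$ is the unique zero of $I_\alpha(\cdot,F)$ with a sign change from negative to positive, we have
\[
I_\alpha(\mu_\alpha-\varepsilon,F)<0<I_\alpha(\mu_\alpha+\varepsilon,F).
\]
The plan is to show that the empirical identification function, evaluated at the two fixed points $\mu_\alpha\pm\varepsilon$ and with the branch cut at $\hat\mu_n\pm\pi$, converges a.s.\ to these two values. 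Since Theorem~\ref{prop:empirical} shows that $I_{n,\alpha}$ has a unique zero, and that $I_{n,\alpha}$ changes sign from negative to positive there (it is the unique minimizer of the differentiable $g_{n,\alpha}$, and $I_{n,\alpha}$ is continuous), this zero must then lie in $(\mu_\alpha-\varepsilon,\mu_\alpha+\varepsilon)$ eventually. The uniqueness of the zero is essential here: it excludes zeros outside the window, which a purely local argument could not do.

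The key step is to compare $I_{n,\alpha}(\nu,\theta)$, whose observations are represented in $(\hat\mu_n-\pi,\hat\mu_n+\pi)$, with
\[
\tilde I_{n,\alpha}(\nu)=\frac1n\sum_{i=1}^n I_\alpha(\nu,X_i),
\]
where the $X_i$ are represented in $(-\pi,\pi)$. For fixed $\nu$, the SLLN gives $\tilde I_{n,\alpha}(\nu)\to I_\alpha(\nu,F)$ a.s., because $|I_\alpha(\nu,x)|\le 1$. For the difference, note that the summands for point $i$ coincide unless the representative of $X_i$ changes, that is, unless $X_i$ lies in the arc between $-\pi$ and $\hat\mu_n-\pi$ (or, in the positive direction, near $\pi$). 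Near $\pm\pi$, and hence far from $\nu$, such a point can only switch between the ``$>\nu$'' and ``$\le\nu$'' groups. Its summand changes by at most $|1-2\alpha|\,|\sin(\nu-X_i)|\le 1$, and in fact by at most $|\sin(\nu-X_i)|$, which is close to $|\sin\nu|$. Therefore
\[
\bigl|I_{n,\alpha}(\nu,\theta)-\tilde I_{n,\alpha}(\nu)\bigr|\le \frac1n\sum_{i=1}^n \mathbf 1_{\{X_i\in A_\delta\}},
\]
where $A_\delta$ is the $\delta$-neighbourhood of the antipode $\pm\pi$. By the SLLN the right-hand side converges a.s.\ to $P(X\in A_\delta)$. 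This probability tends to $0$ as $\delta\downarrow 0$, by absolute continuity, which is exactly condition \eqref{antipode}.

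To conclude, choose $\delta$ so small that
\[
P(X\in A_\delta)<\tfrac12\min\{|I_\alpha(\mu_\alpha\pm\varepsilon,F)|\}.
\]
Then a.s., for all large $n$, $I_{n,\alpha}(\mu_\alpha-\varepsilon)<0<I_{n,\alpha}(\mu_\alpha+\varepsilon)$, so $\hat\mu_{n,\alpha}$ lies in $(\mu_\alpha-\varepsilon,\mu_\alpha+\varepsilon)$. It remains to check that the $\mu$-centred representative $\hat\mu^{(\mu)}_{n,\alpha}$ is the same number: since the window is strictly inside $(-\pi,\pi)$ and $\hat\mu_n\to 0$, the point needs no re-wrapping. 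Taking $\varepsilon$ along a countable sequence gives a.s.\ convergence.

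The main obstacle is the branch-cut comparison. One must ensure the bound is uniform enough that the data-dependent cut at $\hat\mu_n\pm\pi$ is controlled simultaneously for all $n$ on a single null-set complement. The monotonicity in $\delta$ of the events $\{|\hat\mu_n|<\delta\}$, together with the SLLN for the countably many indicators $\mathbf 1_{A_{1/m}}$, handles this.
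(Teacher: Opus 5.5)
Your proposal is correct and follows essentially the paper's argument. You rotate to $\mu=0$, bound the difference between the $\hat\mu_n$-branched and $0$-branched empirical identification functions by $P_n(A_\delta)$, apply the SLLN at the two points $\mu_\alpha\pm\varepsilon$, and conclude using the sign change together with the uniqueness of the empirical zero from Theorem~\ref{prop:empirical}. The only cosmetic differences are these: you fix $\delta$ in terms of $\varepsilon$ and obtain the sharper constant $|1-2\alpha|$ instead of $2$, whereas the paper first proves pointwise a.s.\ convergence by letting $\delta\downarrow 0$.
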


\begin{theorem}[Finite-dimensional CLT for empirical circular expectiles] \label{thm:clt}
Let $X_1,X_2,\ldots$ be iid from an absolutely continuous circular distribution. Assume 
$R:=|\Ex e^{iX}|>0$, and let $\mu$ be the unique circular mean. Let 
\[
Y := \operatorname{Arg}_{(-\pi,\pi)} \left(e^{i(X-\mu)}\right)
\]
be the $\mu$-centered representative of $X$. Thus $Y\in(-\pi,\pi)$ a.s., $\Ex\sin Y=0$,
$R=\Ex\cos Y>0$. Assume that $Y$ has a density $f$ which is positive a.e. on
$(-\pi,\pi)$. Assume further that $f$ is continuous at the antipode of
$0$, and write
\[
f_A := \lim_{y\downarrow-\pi}f(y) = \lim_{y\uparrow\pi}f(y).
\]
Let $\alpha_1,\ldots,\alpha_k\in(0,1)$, and let $q_j, j=1,\ldots,k,$
denote the corresponding population circular expectiles in the $\mu$-centered chart. For $j=1,\ldots,k$, define
\[
I_{\alpha_j}(q_j,y) = \alpha_j\sin(q_j-y)\mathbf 1_{\{y>q_j\}}
+ (1-\alpha_j)\sin(q_j-y)\mathbf 1_{\{y\le q_j\}},
\quad 
A_j = \Big. \frac{\partial}{\partial\nu}I_{\alpha_j}(\nu,F) \Big|_{\nu=q_j}.
\]
Then $A_j>0$. Equivalently, $A_j=C_{\alpha_j}(q_j)\cos q_j+S_{\alpha_j}(q_j)\sin q_j$. Set
\[
B_j = (1-2\alpha_j)\sin q_j\, f_A .
\]
Let $\hat\mu_{n,\alpha_j}$ be the empirical circular $\alpha_j$-expectile defined in Section~3, and let
$\hat q_{n,j}$ be its $\mu$-centered representative. Then
\[
\sqrt n \begin{pmatrix}
\hat q_{n,1}-q_1\\ \vdots\\ \hat q_{n,k}-q_k
\end{pmatrix}
\; \stackrel{\cal L}{\longrightarrow} \;  N_k(0,\Sigma),
\]
where
\[
\Sigma_{j\ell} = \frac{\Ex[W_j(Y)W_\ell(Y)]}{A_jA_\ell}, \quad j,\ell=1,\ldots,k,
\]
with
\[
W_j(Y) = I_{\alpha_j}(q_j,Y) + \frac{B_j}{R}\sin Y.
\]
\end{theorem}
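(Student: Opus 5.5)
The plan is to derive an asymptotic linear (Bahadur-type) representation for each $\hat q_{n,j}$ and then apply the multivariate CLT. The new feature compared with linear expectiles is that the branch point of the linearization is $\hat\mu_n$, not $\mu$.

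First, $A_j>0$ follows from the proof of Theorem~\ref{thm:main-lower-expectile}(a). There it was shown that $I'_\alpha(q(\alpha),F)>0$ at the unique root, including the boundary case $q(\alpha)=-\pi/2$. For $\alpha_j>1/2$ the same conclusion follows from the reflection in Remark~\ref{rem:upper-expectiles}, since $I_\alpha(\nu,F_X)=-I_{1-\alpha}(-\nu,F_Y)$ preserves the sign of the $\nu$-derivative. The identity $A_j=C_{\alpha_j}(q_j)\cos q_j+S_{\alpha_j}(q_j)\sin q_j$ is just \eqref{ident-derivative}.

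Next I would rewrite the empirical identification function in the fixed $\mu$-chart. Put $\delta_n:=\operatorname{Arg}_{(-\pi,\pi)}(e^{i(\hat\mu_n-\mu)})$. A point $Y_i$ in the $\mu$-chart is placed in the $\hat\mu_n$-chart by the cut at $\delta_n\pm\pi$. Hence $I_{n,\alpha}(\nu)=P_n h_{\nu,\delta_n}$, where
\[
h_{\nu,\delta}(y)=\alpha\sin(\nu-y)\mathbf 1_{\{y\in(\nu,\delta+\pi)\}}+(1-\alpha)\sin(\nu-y)\mathbf 1_{\{y\in(\delta-\pi,\nu]\}},
\]
with $y$ reduced modulo $2\pi$. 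Set $\Psi(\nu,\delta):=\Ex h_{\nu,\delta}(Y)$. By construction $\partial_\nu\Psi(q,0)=A$.

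The key calculation is $\partial_\delta\Psi(q,0)$. Moving the cut by $\delta>0$ transfers the mass in $(-\pi,-\pi+\delta)$ from the lower branch to the upper end near $\pi$. On the transferred mass the weight changes from $1-\alpha$ to $\alpha$, and the integrand is $\sin(q-y)\approx\sin(q+\pi)=-\sin q$. Continuity of $f$ at the antipode then gives $\partial_\delta\Psi(q,0)=\pm(1-2\alpha)\sin q\,f_A=\pm B$. I would track the sign carefully so that it matches $+B/R$ in $W_j$.

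In parallel, the standard delta-method expansion for the circular mean gives
\[
\sqrt n\,\delta_n=\frac{1}{R}\,\frac{1}{\sqrt n}\sum_{i=1}^n\sin Y_i+o_p(1),
\]
obtained by linearizing $\operatorname{Arg}$ at $(R,0)$.

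Then comes the stochastic equicontinuity step. The class $\{h_{\nu,\delta}:|\nu-q|,|\delta|\le\varepsilon\}$ consists of bounded smooth functions multiplied by indicators of intervals. It is therefore VC-type, hence Donsker, and $h_{\nu,\delta}\to h_{q,0}$ in $L^2(P)$ as $(\nu,\delta)\to(q,0)$. By Proposition~\ref{consistency} and $\delta_n\to0$ a.s., we get
\[
\sqrt n\,(P_n-P)\bigl(h_{\hat q_n,\delta_n}-h_{q,0}\bigr)=o_p(1).
\]
Since $g_{n,\alpha}$ is differentiable and $\hat q_n$ is an interior minimizer, $P_nh_{\hat q_n,\delta_n}=0$ exactly. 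Using $\Psi(q,0)=0$ and differentiability of $\Psi$ at $(q,0)$, this yields
\[
0=\sqrt n\,P_nh_{q,0}+A\sqrt n(\hat q_n-q)+B\sqrt n\,\delta_n+o_p\bigl(1+\sqrt n|\hat q_n-q|\bigr).
\]
From this, first $\sqrt n$-tightness of $\hat q_n-q$ follows, and then
\[
\sqrt n(\hat q_n-q)=-\frac{1}{A}\frac{1}{\sqrt n}\sum_{i=1}^n W(Y_i)+o_p(1).
\]
Doing this for $j=1,\ldots,k$ and applying the multivariate CLT to the mean-zero vectors $(W_1(Y_i),\dots,W_k(Y_i))$ gives $\Sigma$; the common minus sign cancels in the covariance. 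Each $W_j$ has mean zero because $\Ex I_{\alpha_j}(q_j,Y)=0$ and $\Ex\sin Y=0$.

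I expect the main obstacle to be the differentiability of $\Psi$ in $\delta$ jointly with $\nu$. It needs the continuity of $f$ at the antipode, and the sign bookkeeping has to reproduce exactly $B_j=(1-2\alpha_j)\sin q_j f_A$ with the $+$ sign in $W_j$. A secondary care point is verifying that $\hat q_n$ stays away from the cut, so that it is an interior zero of the empirical identification function. This follows from consistency together with $q_j\in(-\pi,\pi)$.
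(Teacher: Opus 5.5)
Your proposal is correct and follows essentially the same route as the paper. Both treat the estimator as a Z-estimator with the branch center as an extra parameter, obtain $B_j$ from the mass crossing the cut at the antipode, and get stochastic equicontinuity from a VC/Donsker class with $L_2(P)$-continuity; your own description of the transfer already fixes $\partial_\delta\Psi(q,0)=+B_j$. The only difference is that the paper stacks $P_n\sin(m-Y)=0$ as the first row of a joint $(k+1)$-dimensional system with lower-triangular Jacobian, whereas you plug in the delta-method linearization of $\hat\mu_n$ separately; you should also add the trivial case $\alpha_j=1/2$, where $A_j=R/2$ and $B_j=0$.
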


\begin{proof}
By rotation invariance, we work in the $\mu$-centered chart and assume $\mu=0$. Thus $Y=X\in(-\pi,\pi)$ a.s., $\Ex\sin Y=0$, $R=\Ex\cos Y>0$. Let $M_n$ 
denote the sample circular mean in this chart. Then $M_n\to 0$ a.s.

By Theorem~3 and Remark~4, for $\alpha_j\in(0,1), \alpha_j\neq 1/2$, the population circular $\alpha_j$-expectile $q_j$ is the unique sign-changing zero of $I_{\alpha_j}(\nu,F)$, and $A_j>0$ holds.
For $\alpha_j=1/2$, one has $q_j=0$, $I_{1/2}(\nu,F)=R\sin(\nu)/2, A_j=R/2>0$.
Moreover, by Theorem~8 and Remark~9, the empirical circular expectiles are uniquely defined almost surely. Proposition~12, together with reflection for $\alpha_j>1/2$ and the usual consistency of the sample circular mean for $\alpha_j=1/2$, gives $\hat q_{n,j}\to q_j$ in probability for $j=1,\ldots,k$.

For a branch center $m$, let
\[
y^{(m)} = m+\operatorname{Arg}_{(-\pi,\pi)} \left(e^{i(y-m)}\right)
\]
denote the representative of $y$ in $(m-\pi,m+\pi)$. For $\nu\in(m-\pi,m+\pi)$, set
\[
\psi_\alpha(m,\nu;y) = \alpha\sin(\nu-y^{(m)})\mathbf 1_{\{y^{(m)}>\nu\}}
+ (1-\alpha)\sin(\nu-y^{(m)})\mathbf 1_{\{y^{(m)}\le\nu\}} .
\]
The sample circular mean satisfies
\[
P_n\sin(M_n-Y)=0 .
\]
For the empirical circular expectile, let $\tilde q_{n,j}\in(M_n-\pi,M_n+\pi)$ denote the representative used in the empirical criterion. By the first-order condition for the empirical expectile,
\[
P_n\psi_{\alpha_j}(M_n,\tilde q_{n,j};Y)=0,
\qquad j=1,\ldots,k .
\]
Since $M_n\to0$ and $\hat q_{n,j}\to q_j\in(-\pi,\pi)$, the
$\mu$-centered representative $\hat q_{n,j}$ belongs to
$(M_n-\pi,M_n+\pi)$ with probability tending to one, and hence $\tilde q_{n,j}=\hat q_{n,j}$
with probability tending to one. Therefore, in the local asymptotic argument, we may write the empirical equations as
\[
P_n\sin(M_n-Y)=0, \qquad P_n\psi_{\alpha_j}(M_n,\hat q_{n,j};Y)=0, \quad j=1,\ldots,k,
\]
with probability tending to one. Here $P_n$ denotes the empirical measure of $Y_1,\ldots,Y_n$.
Define $\hat\theta_n=(M_n,\hat q_{n,1},\ldots,\hat q_{n,k})$ and the vector-valued estimating function
\[
\Psi_n(m,\nu_1,\ldots,\nu_k) =  P_n
\begin{pmatrix}
\sin(m-Y)\\ \psi_{\alpha_1}(m,\nu_1;Y)\\ \vdots\\ \psi_{\alpha_k}(m,\nu_k;Y)
\end{pmatrix}.
\]
Then the preceding equations imply 
\[
\Psi_n(\hat\theta_n)=0
\]
with probability tending to one. In particular, $\sqrt n\,\Psi_n(\hat\theta_n)=o_P(1)$,
which is the approxi\-mate-zero condition in the Z-estimator expansion \citep[Th.~5.21]{VV:1998}.
Let
\[
\Psi(m,\nu_1,\ldots,\nu_k) = \Ex\Psi_n(m,\nu_1,\ldots,\nu_k),
\]
and put $\theta_0:=(0,q_1,\ldots,q_k)$. Then $\Psi(\theta_0)=0$. Moreover,
\[
\sqrt n\,\Psi_n(\theta_0) = \frac1{\sqrt n}\sum_{i=1}^n
\begin{pmatrix}
-\sin Y_i\\ I_{\alpha_1}(q_1,Y_i)\\ \vdots\\ I_{\alpha_k}(q_k,Y_i)
\end{pmatrix}.
\]
The summands are centered and bounded; in particular, the finite second-moment condition required in the Z-estimator argument is satisfied.

We next compute the derivative matrix of $\Psi$ at $\theta_0$, obtaining 
\[
\left. \frac{\partial}{\partial m} \Ex\sin(m-Y) \right|_{m=0} = \Ex\cos Y = R, 
\quad \text{and} \quad 
\left. \frac{\partial}{\partial\nu} \Ex\psi_{\alpha_j}(0,\nu;Y) \right|_{\nu=q_j} = A_j, 
\]
for $j=1,\ldots,k$. 
It remains to compute the derivative of $\Ex\psi_{\alpha_j}(m,q_j;Y)$ with respect to the branch center $m$. This derivative is caused only by observations crossing the branch cut at the antipode. For $h>0$, the observations in $(-\pi,-\pi+h]$ are moved from the left end of the interval to the right end. Hence
\[
\Ex\psi_{\alpha_j}(h,q_j;Y)-\Ex\psi_{\alpha_j}(0,q_j;Y)
= (2\alpha_j-1) \int_{-\pi}^{-\pi+h} \sin(q_j-y)f(y)\,dy.
\]
Dividing by $h$ and letting $h\downarrow0$, using the continuity of $f$ at the antipode, gives
\[
(2\alpha_j-1)\sin(q_j+\pi)f_A = (1-2\alpha_j)\sin q_j\, f_A.
\]
The same derivative is obtained from the left by considering the interval
$(\pi+h,\pi)$ for $h<0$. Therefore
\[
\left. \frac{\partial}{\partial m} \Ex\psi_{\alpha_j}(m,q_j;Y) \right|_{m=0} 
= B_j := (1-2\alpha_j)\sin q_j\,f_A .
\]
Thus
\[
D\Psi(\theta_0) =
\begin{pmatrix}
R & 0 & 0 & \cdots & 0\\
B_1 & A_1 & 0 & \cdots & 0\\
B_2 & 0 & A_2 & \cdots & 0\\
\vdots & \vdots & \vdots & \ddots & \vdots\\
B_k & 0 & 0 & \cdots & A_k
\end{pmatrix}.
\]
This matrix is nonsingular because $R>0$ and $A_j>0$ for all $j=1,\ldots,k$.

It remains to verify the stochastic equicontinuity condition used in the finite-dimensional Z-estimator argument; see Theorem~5.21 and the remarks following it in \citet{VV:1998}. Let
\[
\theta=(m,\nu_1,\ldots,\nu_k), \quad \theta_0=(0,q_1,\ldots,q_k),
\]
and write $h_\theta$ for the vector-valued estimating function appearing inside $P_n$ in the definition of $\Psi_n$. 
The class $\{h_\theta:\theta$ in a neighbourhood of $\theta_0\}$ is bounded and of VC type, since its components are bounded trigonometric functions multiplied by indicators of intervals, or finite unions of intervals, with endpoints depending on finitely many parameters. Hence the class is $P$-Donsker.

Moreover, $h_\theta$ is $L_2(P)$-continuous at $\theta_0$. The first component satisfies
\[
|\sin(m-y)-\sin(-y)|\le |m|.
\]
For one expectile component, for $(m,\nu)$ sufficiently close to $(0,q_j)$, there is a constant $C>0$ such that
\[ 
\left| \psi_{\alpha_j}(m,\nu;y)-\psi_{\alpha_j}(0,q_j;y) \right|
\le C|\nu-q_j| + C\mathbf 1_{\{|y-q_j|\le C|\nu-q_j|\}} + C\mathbf 1_{A_{|m|}}(y),
\]
where $A_\delta=(-\pi,-\pi+\delta]\cup[\pi-\delta,\pi)$. Since $Y$ has a density,
\[
P(|Y-q_j|\le\delta)\to 0, \quad P(Y\in A_\delta)\to 0 \quad(\delta\downarrow0).
\]
Therefore
\[
\|h_\theta-h_{\theta_0}\|_{L_2(P)}\to 0 \quad(\theta\to\theta_0).
\]
Since the class is $P$-Donsker, $h_\theta$ is $L_2(P)$-continuous at $\theta_0$,  and 
$\hat\theta_n\to\theta_0$ in probability, Lemma~19.24 of \citet{VV:1998} implies
\[
\sqrt n(P_n-P)(h_{\hat\theta_n}-h_{\theta_0})=o_P(1),
\]
which corresponds to (5.22) in the proof of Theorem~5.21 of \citet{VV:1998}; see also the remarks following that theorem. 
Together with the approximate-zero condition, differentiability of $\Psi$ at $\theta_0$, and nonsingularity of $D\Psi(\theta_0)$, the finite-dimensional Z-estimator expansion applies:
\[
\sqrt n 
\begin{pmatrix}
M_n\\ \hat q_{n,1}-q_1\\ \vdots\\ \hat q_{n,k}-q_k
\end{pmatrix}
= - \left(D\Psi(\theta_0)\right)^{-1} \frac1{\sqrt n} \sum_{i=1}^n
\begin{pmatrix}
-\sin Y_i\\ I_{\alpha_1}(q_1,Y_i)\\ \vdots\\ I_{\alpha_k}(q_k,Y_i)
\end{pmatrix}
+ o_P(1).
\]
The first row gives
\[
\sqrt n\,M_n = \frac1R \frac1{\sqrt n}\sum_{i=1}^n \sin Y_i + o_P(1).
\]
For $j=1,\ldots,k$, the $j$-th expectile row gives
\[
B_j\sqrt n\,M_n + A_j\sqrt n(\hat q_{n,j}-q_j) + \frac1{\sqrt n}\sum_{i=1}^n I_{\alpha_j}(q_j,Y_i)
= o_P(1).
\]
Substituting the expansion of $\sqrt n\,M_n$, we obtain
\[
\sqrt n(\hat q_{n,j}-q_j) = -\frac1{A_j} \frac1{\sqrt n}\sum_{i=1}^n 
\left[ I_{\alpha_j}(q_j,Y_i) + \frac{B_j}{R}\sin Y_i \right] + o_P(1).
\]
With
\[
W_j(Y) = I_{\alpha_j}(q_j,Y) + \frac{B_j}{R}\sin Y,
\]
this yields the joint linearization
\[
\sqrt n
\begin{pmatrix}
\hat q_{n,1}-q_1\\ \vdots\\ \hat q_{n,k}-q_k
\end{pmatrix}
= 
- \frac1{\sqrt n}\sum_{i=1}^n
\begin{pmatrix}
W_1(Y_i)/A_1\\ \vdots\\ W_k(Y_i)/A_k 
\end{pmatrix}
+ o_P(1).
\]
Since
\[
\Ex W_j(Y) = \Ex I_{\alpha_j}(q_j,Y) + \frac{B_j}{R}\Ex\sin Y = 0,
\]
and the vector $(W_1(Y),\ldots,W_k(Y))$ is bounded, the multivariate central
limit theorem gives
\[
\sqrt n
\begin{pmatrix}
\hat q_{n,1}-q_1\\ \vdots\\ \hat q_{n,k}-q_k 
\end{pmatrix}
\stackrel{\cal{L}}{\longrightarrow} N_k(0,\Sigma),
\]
where
\[
\Sigma_{j\ell} = \frac{\Ex[W_j(Y)W_\ell(Y)]}{A_jA_\ell}, \quad j,\ell=1,\ldots,k.
\]
This proves the assertion.
\end{proof}

\begin{remark}
For $\alpha_j=1/2$, the corresponding expectile is the circular mean. In this case $q_j=0$,
\[
I_{1/2}(0,y)=-\frac12\sin y,\quad A_j=\frac R2,\quad B_j=0.
\]
Hence Theorem~13 yields
\[
\sqrt n\,\hat q_{n,j}
=
\frac1R\frac1{\sqrt n}\sum_{i=1}^n\sin Y_i+o_P(1),
\]
and therefore
\[
\sqrt n\,\hat q_{n,j} \stackrel{\cal L}{\longrightarrow} N\left(0,\frac{\Ex[\sin^2Y]}{R^2}\right).
\]
Thus the theorem contains the classical asymptotic normality of the sample
circular mean as the special case $\alpha=1/2$.    
\end{remark}

\begin{remark}
The assumptions of Theorem~\ref{thm:clt} are stronger than necessary. For the asymptotic normality result it is enough to assume uniqueness of the relevant population roots, positivity of the corresponding slopes $A_j$, consistency of the empirical roots, and continuity of the density at the
antipode. We use the assumptions of Theorem~\ref{thm:main-lower-expectile} in order to keep the statement aligned with the preceding uniqueness theory.
\end{remark}

Fig. \ref{fig2} shows stack plots of the circular mean and the circular $\alpha$-expectiles for $\alpha=0.75$ and 0.9 for samples of sizes 25, 100 and 400 from the von Mises distribution $vM(0,1)$, based on $10^4$ replications.

\begin{figure}
\centering
\includegraphics[width=\textwidth]{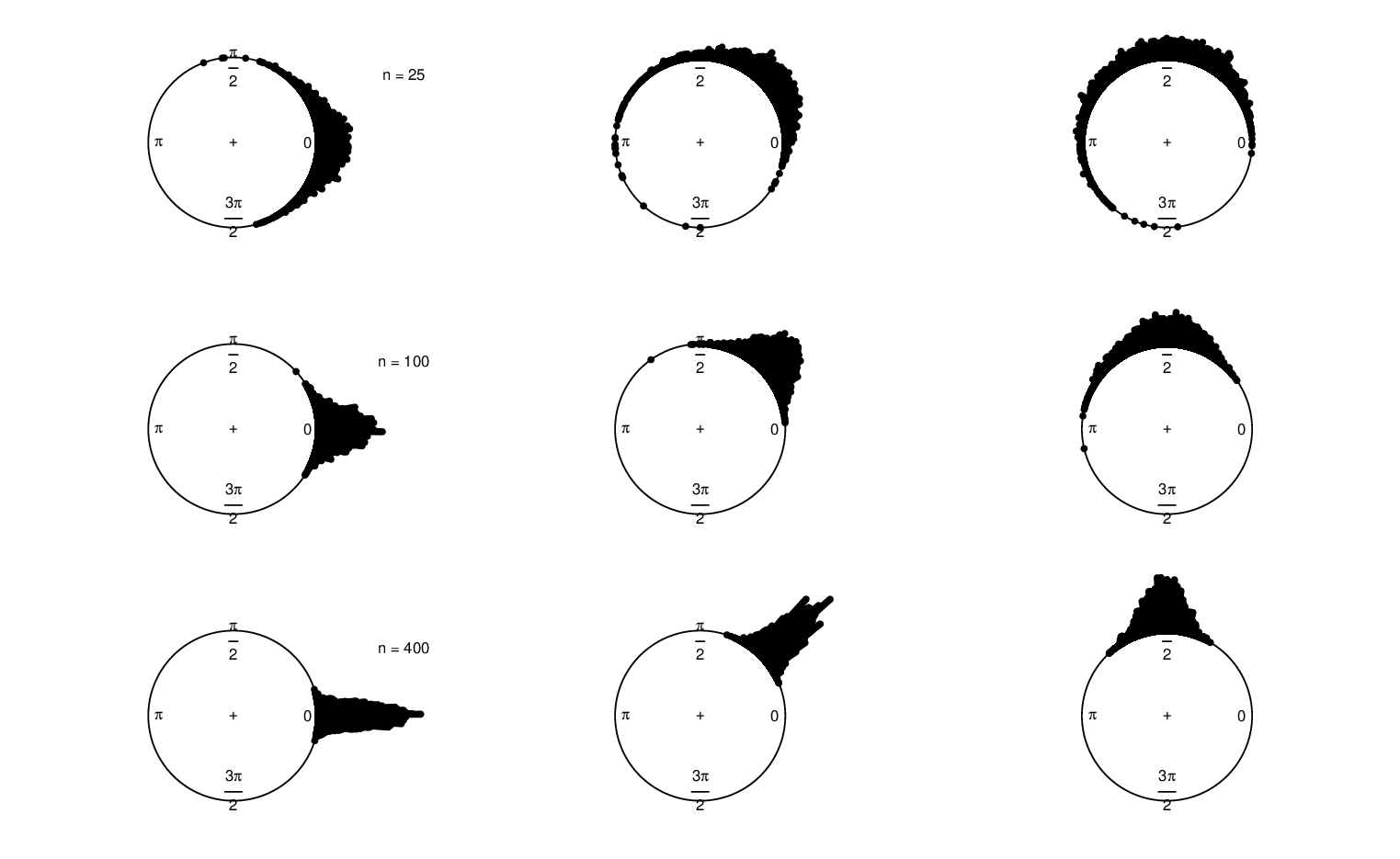}
\caption{Stack plot of $\hat\mu_{n}$ (left), $\hat\mu_{n,0.75}$ (middle) and $\hat\mu_{n,0.90}$ (right) for samples of sizes 25 (upper row), 100 (middle row) and 400 (lower row) from the von Mises distribution $vM(0,1)$, with $10^4$ replications.} \label{fig2}
\end{figure}

As a numerical check of Theorem~13, Table~\ref{tab:clt-simulation} compares the limiting covariance matrix for the pair $(\alpha_1,\alpha_2)=(0.75,0.90)$ with Monte Carlo covariance matrices of
\[
\sqrt n(\hat q_{n,.75}-q_{.75},\hat q_{n,.90}-q_{.90})
\]
for samples from $vM(0,1)$. The approximation is already reasonable for $n=25$, especially for the correlation, and becomes accurate for $n=100$ and $n=400$. This supports the finite-dimensional normal
approximation derived above.

All computations were carried out in R \citep{R:2025}, using the
\texttt{circular} package \citep{CI:2025}.

Code for computing theoretical and empirical circular expectiles can be found on GitHub at \url{https://github.com/BernhardKlar/CircularExpectiles}

\begin{table}[t]
\centering
\begin{tabular}{c|rrrr}
\hline
 & $\operatorname{Var}_{.75}$ & $\operatorname{Cov}_{.75,.90}$
 & $\operatorname{Var}_{.90}$ & $\operatorname{Corr}_{.75,.90}$ \\ \hline
$n=25$  & 4.84 & 5.76 & 10.08 & 0.825\\
$n=100$ & 4.15 & 5.84 & 11.96 & 0.829\\
$n=400$ & 4.03 & 5.92 & 12.57 & 0.831\\
Limit   & 3.94 & 5.83 & 12.53 & 0.829\\
\hline
\end{tabular}
\caption{Monte Carlo covariance matrices of 
$\sqrt n(\hat q_{n,.75}-q_{.75},\hat q_{n,.90}-q_{.90})$ for samples from the von Mises distribution $vM(0,1)$, based on $10^5$ replications, compared with the limiting covariance matrix from Theorem~13.}
\label{tab:clt-simulation}
\end{table}

\subsection{Potential applications}

The results developed above suggest several possible applications of circular expectiles. Throughout this subsection, angular differences are understood in the $\mu$-centered chart $(\mu-\pi,\mu+\pi)$. Thus, for $\alpha\in(0,1/2)$, the lower expectile $\mu_\alpha$ lies to the left of $\mu$, while the upper expectile $\mu_{1-\alpha}$ lies to the right of $\mu$, and expressions such as 
$\mu_{1-\alpha}-\mu_\alpha$ are interpreted as ordinary differences of these representatives.

First, circular expectiles lead naturally to expectile-based measures of dispersion. In analogy with the linear case \citep{BFW:2022,EK:2023}, one may define the circular interexpectile range by
\[
R_\alpha = \mu_{1-\alpha}-\mu_\alpha, \quad \alpha\in(0,1/2).
\]
This quantity measures the width of the central expectile interval around the circular mean direction. It is small for concentrated distributions and becomes large when the distribution spreads over a substantial part of the circle. The empirical counterpart is 
$\hat R_{n,\alpha} = \hat\mu_{n,1-\alpha}-\hat\mu_{n,\alpha}$. The finite-dimensional
asymptotic normality established above provides the basis for studying the
large-sample distribution of $\hat R_{n,\alpha}$, for constructing confidence
intervals, and for comparing such expectile-based dispersion measures with
classical circular measures of spread.

Second, the circular expectile curve carries information about the full distribution. Indeed, under regularity assumptions such as those used in the uniqueness results above, the full map
$\alpha\mapsto\mu_\alpha,\, \alpha\in(0,1)$, determines the underlying circular distribution. To see this informally, assume
by rotation invariance that $\mu=0$. For $0<\alpha<1/2$, the expectile $q(\alpha)=\mu_\alpha\in(-\pi,0)$ satisfies
\[
0 = I_\alpha(q(\alpha),F) = \alpha C\sin q(\alpha) + (1-2\alpha)D(q(\alpha)),
\]
where 
\[
C=\Ex\cos X>0, \qquad D(\nu) = \int_{-\pi}^{\nu}\sin(\nu-x)\,dF(x).
\]
Since $q$ is strictly monotone, the lower expectile curve determines
$D(\nu)/C$ on $(-\pi,0)$. The upper expectile curve gives the analogous
information on $(0,\pi)$. If $F$ has density $f$, then
\[
D''(\nu)+D(\nu)=f(\nu),
\]
so the expectile curve determines the density, up to the normalizing constant,
which is fixed by $\int_{-\pi}^{\pi}f(x)\,dx=1$.

This identifiability observation can be used to characterize symmetry. A circular distribution is symmetric about $\mu$ if and only if
\[ 
\mu_{1-\alpha}-\mu = \mu-\mu_\alpha \quad \text{for all } \alpha\in(0,1/2).
\]
Thus circular expectiles provide a natural basis for graphical diagnostics and formal tests of circular symmetry. For example, one may plot the two distances $\mu_{1-\alpha}-\mu$ and $\mu-\mu_\alpha$ as functions of $\alpha$, or plot their difference. This parallels expectile-based symmetry diagnostics in
the linear case \citep{EK:2022}.

Third, circular expectiles can be used to define measures of skewness. In
analogy with the expectile-based skewness measures studied in the linear
setting \citep{EK:2022,EK:2023}, define
\[
\gamma_\alpha = \frac{\mu_{1-\alpha}+\mu_\alpha-2\mu}{\mu_{1-\alpha}-\mu_\alpha}
= \frac{d_\alpha^+-d_\alpha^-}{d_\alpha^++d_\alpha^-}, \quad \alpha\in(0,1/2),
\]
where $d_\alpha^-=\mu-\mu_\alpha$ and $d_\alpha^+=\mu_{1-\alpha}-\mu$.
Hence $\gamma_\alpha\in(-1,1)$, with $\gamma_\alpha=0$ corresponding to balanced lower and upper expectile distances at level $\alpha$. Positive values indicate a longer upper expectile tail, while negative values indicate a longer lower expectile tail. The function $\alpha\mapsto\gamma_\alpha$ therefore yields a skewness profile of the circular distribution. Its empirical counterpart,
\[
\hat\gamma_{n,\alpha} 
= \frac{\hat\mu_{n,1-\alpha}+\hat\mu_{n,\alpha}-2\hat\mu_n}{\hat\mu_{n,1-\alpha}-\hat\mu_{n,\alpha}},
\]
can be studied by applying the delta method to the joint asymptotic distribution of
$(\hat\mu_n,\hat\mu_{n,\alpha},\hat\mu_{n,1-\alpha})$.
This opens the way to comparing expectile-based circular skewness measures with existing measures of circular skewness, and to studying their finite-sample and asymptotic behaviour; cf. \citet{EK:2020} for related investigations in the linear case.

Several further directions are possible. Since the full circular expectile curve identifies the distribution, one may develop goodness-of-fit procedures based on distances between empirical and fitted expectile curves. Finally, another natural direction is conditional or regression-type circular expectiles, where the center, lower expectiles and upper expectiles are allowed to depend on
covariates.

\appendix
\section{Proofs of Theorem~\ref{prop:empirical} and Proposition~\ref{consistency}}    

\begin{proof}[Proof of Theorem~\ref{prop:empirical}]
For $\nu\in(\theta_{(k)},\theta_{(k+1)})$, the indicator sets are constant, so
\[
C_\nu^- := \frac1n \sum_{i=1}^n \cos(\theta_i)\mathbf 1_{\{\theta_i\le \nu\}},
\quad
S_\nu^- := \frac1n \sum_{i=1}^n \sin(\theta_i)\mathbf 1_{\{\theta_i\le \nu\}},
\]
satisfy $C_\nu^- = C_k^-$ and $S_\nu^- = S_k^-$. Hence
\[
I_{n,\alpha}(\nu) = \alpha C_n\sin\nu + (1-2\alpha)\bigl(C_k^-\sin\nu-S_k^-\cos\nu\bigr)
= C_\alpha^{(k)}\sin\nu - S_\alpha^{(k)}\cos\nu.
\]
We next analyze the coefficients. The sequence $(S_k^-)_{k=0}^n$ is strictly decreasing as long as $\theta_{(k)}<0$, strictly increasing after the sign change, and satisfies
$S_0^-=S_n^-=0$. Hence $S_k^-<0$ for all $k=1,\ldots,n-1$. Therefore, for $\alpha\in(0,1/2)$,
\[
S_\alpha^{(k)}=(1-2\alpha)S_k^-<0 \qquad \text{for all } k=1,\ldots,n-1.
\]
For $k=0$ and $k=n$, one has
\[
S_\alpha^{(0)}=S_\alpha^{(n)}=0, \qquad
C_\alpha^{(0)}=\alpha C_n>0, \qquad C_\alpha^{(n)}=(1-\alpha)C_n>0.
\]
Therefore, 
\begin{align} \label{C-S-not-0}
(C_\alpha^{(k)},S_\alpha^{(k)}) \neq (0,0) \quad \text{for all } k=0,\ldots,n.
\end{align}
Hence, on each interval $(\theta_{(k)},\theta_{(k+1)})$, the function
\[
I_{n,\alpha}(\nu)=C_\alpha^{(k)}\sin\nu - S_\alpha^{(k)}\cos\nu
\]
is a nontrivial sinusoid. Now define, on $(-\pi/2,\pi/2)$,
\[
J_{n,\alpha}(\nu):=\frac{I_{n,\alpha}(\nu)}{\cos\nu}.
\]
Since each summand in $I_{n,\alpha}$ vanishes at a sample point $\theta_i$, the function $I_{n,\alpha}$ is continuous on $(-\pi,\pi)$; hence $J_{n,\alpha}$ is continuous on $(-\pi/2,\pi/2)$. On each interval $(\theta_{(k)},\theta_{(k+1)})\subset(-\pi/2,\pi/2)$, differentiation yields
\[
J'_{n,\alpha}(\nu)=\frac{C_\alpha^{(k)}}{\cos^2\nu}.
\]
We distinguish three cases according to the sign of $C_\alpha^{(k_*)}$, where $k_*$ is the unique index such that $-\pi/2 \in (\theta_{(k_*)},\theta_{(k_*+1)})$.

\medskip\noindent
\emph{Case 1: $C_\alpha^{(k_*)}>0$.}

\begin{sloppypar}
Since
\[
C_\alpha^{(k+1)}-C_\alpha^{(k)}=\frac{1-2\alpha}{n}\cos(\theta_{(k+1)}),
\]
the sequence $k\mapsto C_\alpha^{(k)}$ is strictly increasing on the index range corresponding to 
$(-\pi/2,\pi/2)$. Hence $C_\alpha^{(k)}>0$ for every $k$ such that $(\theta_{(k)},\theta_{(k+1)})\subset(-\pi/2,\pi/2)$, and therefore $J_{n,\alpha}$ is strictly increasing on $(-\pi/2,\pi/2)$. Moreover,
\[
\lim_{\nu\downarrow-\pi/2}J_{n,\alpha}(\nu)=-\infty,
\qquad
J_{n,\alpha}(0)=-S_\alpha^{(k_0)}>0,
\]
where $k_0$ is the unique index such that $0\in(\theta_{(k_0)},\theta_{(k_0+1)})$. Thus there exists exactly one zero in $(-\pi/2,0)$ and no zero in $(0,\pi/2)$.
\end{sloppypar}

On $(-\pi,-\pi/2)$, one has $\sin\nu<0$, $\cos\nu<0$, $S_\alpha^{(k)}<0$, and $C_\alpha^{(k)}\ge 0$ up to the interval containing $-\pi/2$, so
\[
I_{n,\alpha}(\nu)=C_\alpha^{(k)}\sin\nu-S_\alpha^{(k)}\cos\nu<0.
\]

Finally, on $(\pi/2,\pi)$ the coefficients satisfy $C_{\alpha}^{(k)}>0$ and $S_{n,\alpha}^{(k)}\le 0$. Indeed, $C_{\alpha}^{(k)}$ is decreasing on the right tail and ends at
\[
C_{\alpha}^{(n)}=(1-\alpha)C_n>0,
\]
while $S_{\alpha}^{(k)}<0$ for $k=1,\ldots,n-1$ and $S_{\alpha}^{(n)}=0$. 
Hence, for $\nu\in(\pi/2,\pi)$,
\[
I'_{n,\alpha}(\nu) = C_{\alpha}^{(k)}\cos\nu + S_{\alpha}^{(k)}\sin\nu <0 .
\]
Since $\lim_{\nu\uparrow\pi} I_{n,\alpha}(\nu)=0$, it follows that $I_{n,\alpha}(\nu)>0$ for all $\nu\in(\pi/2,\pi)$.
Hence the zero is unique.

\medskip\noindent
\emph{Case 2: $C_\alpha^{(k_*)}=0$.}

Then $I_{n,\alpha}(-\pi/2)=-C_\alpha^{(k_*)}=0$.
As in Case~1, $J_{n,\alpha}$ is increasing on $(-\pi/2,\pi/2)$. 
Note that on the interval containing $-\pi/2$, one has $C_\alpha^{(k_*)}=0$, so
$J_{n,\alpha}'(\nu)=0$, that is, $J_{n,\alpha}$ is constant there. However,
\[
I_{n,\alpha}(\nu) = -S_\alpha^{(k_*)}\cos\nu,
\]
and since $-S_\alpha^{(k_*)}>0$, it follows that $I_{n,\alpha}$ is strictly increasing on this interval. It follows that 
\[
I_{n,\alpha}(\nu)>0 \quad \text{for all }\nu\in(-\pi/2,\pi/2).
\]
Again as in Case~1, one has $I_{n,\alpha}(\nu)<0$ on $(-\pi,-\pi/2)$ and
$I_{n,\alpha}(\nu)>0$ on $(\pi/2,\pi)$. Hence the unique zero is $\nu=-\pi/2$.

\medskip\noindent
\emph{Case 3: $C_\alpha^{(k_*)}<0$.}

Since $C_{\alpha}^{(0)}>0$ and $C_{\alpha}^{(k_*)}<0$, define
\[
m:=\min\{k\le k_*: C_{\alpha}^{(k)}<0\}.
\]
Then $m\ge1$, $C_{\alpha}^{(m-1)}\ge0$, and $\theta_{(m)}<-\pi/2$. Moreover,
$C_{\alpha}^{(k)}<0$ for all $k=m,\ldots,k_*$, since $k\mapsto C_{\alpha}^{(k)}$ is strictly decreasing on the left tail $(-\pi,-\pi/2)$.

First consider $(-\pi,-\pi/2)$. For $k<m$, we have $C_{\alpha}^{(k)}\ge0$, $S_{\alpha}^{(k)}\le0$, and, on this interval, $\sin\nu<0$, $\cos\nu<0$. Hence, by \eqref{C-S-not-0},
\[
I_{n,\alpha}(\nu) = C_{\alpha}^{(k)}\sin\nu-S_{\alpha}^{(k)}\cos\nu < 0
\]
on all intervals lying to the left of $\theta_{(m)}$. 
Using \eqref{C-S-not-0} again, and the conditions that $C_\alpha^{(m-1)}\ge0$, $S_\alpha^{(m-1)}\le0$, and $\theta_{(m)}<-\pi/2$, we also obtain
\[
I_{n,\alpha}(\theta_{(m)}) 
= C_\alpha^{(m-1)}\sin\theta_{(m)} - S_\alpha^{(m-1)}\cos\theta_{(m)}<0.
\]
On the other hand, for $\nu\in(\theta_{(m)},-\pi/2)$, the relevant coefficients satisfy
\[
C_{\alpha}^{(k)}<0, \quad S_{\alpha}^{(k)}<0, \quad \sin\nu<0, \quad \cos\nu<0.
\]
Thus, on each interval of constancy of the coefficients,
\[
I'_{n,\alpha}(\nu) = C_{\alpha}^{(k)}\cos\nu+S_{\alpha}^{(k)}\sin\nu >0 .
\]
Since $I_{n,\alpha}$ is continuous, it follows that $I_{n,\alpha}$ is strictly increasing on
$(\theta_{(m)},-\pi/2)$. Furthermore,
\[
I_{n,\alpha}(-\pi/2) = -C_{\alpha}^{(k_*)}>0.
\]
Consequently, $I_{n,\alpha}$ has exactly one zero in $(\theta_{(m)},-\pi/2)$, and no zero in
$(-\pi,\theta_{(m)}]$. Hence there is exactly one zero in $(-\pi,-\pi/2)$.

We next show that there are no zeros in $(-\pi/2,\pi/2)$. Let $k^+$ be the unique index such that
\[
\pi/2\in(\theta_{(k^+)},\theta_{(k^++1)}).
\]
Since $C_{\alpha}^{(k)}$ is strictly increasing on the central range $(-\pi/2,\pi/2)$, while $C_{\alpha}^{(k_*)}<0$, and since $C_{\alpha}^{(k^+)}>0$, there is exactly one sign change of
$C_{\alpha}^{(k)}$ in this range. The positivity of $C_{\alpha}^{(k^+)}$ follows from the fact that
$C_{\alpha}^{(k)}$ is decreasing on the right tail and ends at $C_{\alpha}^{(n)}>0$.
On $(-\pi/2,\pi/2)$, $J_{n,\alpha}(\nu) = I_{n,\alpha}(\nu)/\cos\nu$ is continuous and, on $(\theta_{(k)},\theta_{(k+1)})$,
\[
J'_{n,\alpha}(\nu) = \frac{C_{\alpha}^{(k)}}{\cos^2\nu}.
\]
Hence $J_{n,\alpha}$ first decreases and then increases. Its minimum is therefore attained either on an interval where $C_{\alpha}^{(k)}=0$, or at a sample point where $C_{\alpha}^{(k)}$ changes sign from negative to positive.
If $C_{\alpha}^{(k)}=0$ on such an interval, then $J_{n,\alpha}(\nu)=-S_{\alpha}^{(k)}>0$.

Otherwise, let $\theta_{(r)}\in(-\pi/2,\pi/2)$ be the sample point at which the sign changes, so that
\[
C_{\alpha}^{(r-1)}<0<C_{\alpha}^{(r)}.
\]
If $\theta_{(r)}<0$, then using the left representation gives
\[
J_{n,\alpha}(\theta_{(r)}) = C_{\alpha}^{(r-1)}\tan\theta_{(r)} - S_{\alpha}^{(r-1)} >0,
\]
because $C_{\alpha}^{(r-1)}<0$, $\tan\theta_{(r)}<0$, and $S_{\alpha}^{(r-1)}<0$. 
If $\theta_{(r)}>0$, then using the right representation gives
\[
J_{n,\alpha}(\theta_{(r)}) = C_{\alpha}^{(r)}\tan\theta_{(r)} - S_{\alpha}^{(r)} > 0.
\]
Thus the minimum of $J_{n,\alpha}$ on $(-\pi/2,\pi/2)$ is strictly positive, and it follows that
\[
I_{n,\alpha}(\nu)>0 \quad \text{for all } \nu\in(-\pi/2,\pi/2).
\]
Finally, consider $(\pi/2,\pi)$. On the right tail, $C_{\alpha}^{(k)}>0$ and $S_{\alpha}^{(k)}\le 0$,
because $C_{\alpha}^{(k)}$ is decreasing there and ends at $C_{\alpha}^{(n)}>0$, while
$S_{\alpha}^{(k)}<0$ for $k=1,\ldots,n-1$ and $S_{\alpha}^{(n)}=0$. Therefore, for $\nu\in(\pi/2,\pi)$,
\[
I'_{n,\alpha}(\nu) = C_{\alpha}^{(k)}\cos\nu+S_{\alpha}^{(k)}\sin\nu < 0.
\]
Since $\lim_{\nu\uparrow\pi}I_{n,\alpha}(\nu)=0$, we obtain
\[
I_{n,\alpha}(\nu)>0 \quad \text{for all } \nu\in(\pi/2,\pi).
\]
Thus, in Case 3, $I_{n,\alpha}$ has exactly one zero, this zero lies
in $(-\pi,-\pi/2)$, and $I_{n,\alpha}$ changes sign from negative to
positive at this zero.

Thus in all cases there exists exactly one zero $q_n(\alpha)\in(-\pi,0)$, and its position relative to $-\pi/2$ is determined by the sign of $C_\alpha^{(k_*)}$.

Finally, since $g_{n,\alpha}$ is differentiable on $(-\pi,\pi)$ with
$\partial g_{n,\alpha}(\nu,\theta)/\partial \nu=I_{n,\alpha}(\nu,\theta)$,
and since $I_{n,\alpha}$ changes sign from negative to positive at $q_n(\alpha)$,
it follows that $q_n(\alpha)$ is the unique minimizer of $g_{n,\alpha}(\nu,\theta)$.
\end{proof}


\begin{proof}[Proof of Proposition~\ref{consistency}]
By rotation invariance, assume $\mu=0$. We write $X$ for the $0$-centered representative, so that $X\in(-\pi,\pi)$ a.s. Put $q_\alpha=\mu_\alpha$ in this chart, and let $M_n:=\hat\mu_n$ be the sample circular mean represented in $(-\pi,\pi)$. 
Since $R>0$, the sample circular mean is strongly consistent, and hence
\[
M_n\to0 \quad\text{a.s.} 
\]
For a center $m$, let
\[
x^{(m)} = m + \operatorname{Arg}_{(-\pi,\pi)} \left(e^{i(x-m)}\right)
\]
denote the representative of $x$ in $(m-\pi,m+\pi)$. For $\nu\in(m-\pi,m+\pi)$, define
\[
\psi_m(\nu,x) = \alpha\sin(\nu-x^{(m)})\mathbf 1_{\{x^{(m)}>\nu\}}
+ (1-\alpha)\sin(\nu-x^{(m)})\mathbf 1_{\{x^{(m)}\le\nu\}} .
\]
With $P_n$ denoting the empirical measure, write
\[
I_{n,\alpha}^{(m)}(\nu) = P_n\psi_m(\nu,\cdot), \quad I_\alpha^{(0)}(\nu,F) = P\psi_0(\nu,\cdot).
\]
We first claim that, for every fixed $\nu\in(-\pi,\pi)$,
\[
I_{n,\alpha}^{(M_n)}(\nu)-I_{n,\alpha}^{(0)}(\nu) \to 0 \quad \text{a.s.}
\]
Indeed, choose $\eta>0$ such that $\nu\in(-\pi+\eta,\pi-\eta)$, and for $0<\delta<\eta$ set
\[
A_\delta=(-\pi,-\pi+\delta]\cup[\pi-\delta,\pi).
\]
If $|m|\le\delta$, then changing the branch center from $0$ to $m$ can change the representative only for observations in $A_\delta$. Since $|\psi_m|\le1$,
\[
\left|I_{n,\alpha}^{(M_n)}(\nu)-I_{n,\alpha}^{(0)}(\nu)\right|
\le 2P_n(A_\delta)
\]
eventually almost surely. Hence
\[
\limsup_{n\to\infty} \left|I_{n,\alpha}^{(M_n)}(\nu)-I_{n,\alpha}^{(0)}(\nu)\right|
\le 2P(A_\delta) \quad\text{a.s.}
\]
Letting $\delta\downarrow0$, using absolute continuity, proves the claim.
The strong law of large numbers gives
\[
I_{n,\alpha}^{(0)}(\nu)\to I_\alpha^{(0)}(\nu,F) \quad\text{a.s.}
\]
Thus
\[
I_{n,\alpha}^{(M_n)}(\nu)\to I_\alpha^{(0)}(\nu,F) \quad\text{a.s.}
\]
Now fix $\varepsilon>0$ such that $q_\alpha\pm\varepsilon\in(-\pi,\pi)$. 
By the assumed sign change of the population identification function,
\[
I_\alpha^{(0)}(q_\alpha-\varepsilon,F)<0 < I_\alpha^{(0)}(q_\alpha+\varepsilon,F).
\]
By the pointwise convergence just proved,
\[
I_{n,\alpha}^{(M_n)}(q_\alpha-\varepsilon)<0 < I_{n,\alpha}^{(M_n)}(q_\alpha+\varepsilon)
\]
eventually almost surely. Also, $q_\alpha\pm\varepsilon\in (M_n-\pi,M_n+\pi)$ eventually.

By Theorem~\ref{prop:empirical}, applied after centering at $M_n$, $I_{n,\alpha}^{(M_n)}$ has a unique zero $q_n(\alpha) \in (M_n-\pi,M_n)$, and this zero is the empirical circular $\alpha$-expectile.
Since $I_{n,\alpha}^{(M_n)}$ is continuous and changes sign between
$q_\alpha-\varepsilon$ and $q_\alpha+\varepsilon$, this zero lies between these two points. Hence, eventually almost surely,
\[
q_\alpha-\varepsilon < \hat\mu_{n,\alpha}^{(0)} < q_\alpha+\varepsilon,
\]
where $\hat\mu_{n,\alpha}^{(0)}$ is the $0$-centered representative of the empirical circular expectile. Since $\varepsilon>0$ was arbitrary,
\[
\hat\mu_{n,\alpha}^{(0)}\to q_\alpha \quad\text{a.s.}
\]
\end{proof}

\subsection*{Declaration of AI Use:} 
The authors acknowledge the use of OpenAI’s ChatGPT-5.5 Pro (chatgpt.com) to assist in the preparation of this manuscript. The AI tool was used to discuss statistical methodologies and evaluate the feasibility of alternative proof strategies. Furthermore, it was used for proof checking, correcting typographical errors in the derivations, and editing the manuscript text. Additionally, building upon a basic version of the computational implementation originally drafted by the author, the AI tool was used to implement input validation checks and optimize execution speed. The correctness of the AI-optimized code was verified by benchmarking its outputs against the original code and conducting independent tests. The author maintains sole accountability for the correctness and adequacy of the arguments and results, as well as for the completeness and accuracy of citations to relevant prior work.

\bibliographystyle{apalike}
\bibliography{CircularExpectiles-biblio}

\end{document}